\newtheorem{theorem}{Theorem}
\begin{document}

\begin{center}
\section*{Variance and Volatility Swaps and Futures Pricing for Stochastic Volatility Models}
{\sc Anatoliy Swishchuk}\\Department of Mathematics and Statistics\\University of Calgary\\2500 University Drive NW\\Calgary, Alberta, Canada, T2N 1N4\\

\vspace{0.5cm}

{\sc Zijia Wang}\\Department of Mathematics and Statistics\\University of Calgary\\2500 University Drive NW\\Calgary, Alberta, Canada, T2N 1N4\\

\end{center}

\hspace{1cm}

{\bf Abstract:} In this chapter, we consider volatility swap, variance swap and VIX future pricing under different stochastic volatility models and jump diffusion models which are commonly used in financial market. We use convexity correction approximation technique and Laplace transform method to evaluate volatility strikes and estimate VIX future prices. In empirical study, we use Markov chain Monte Carlo algorithm for model calibration based on S\&P 500 historical data, evaluate the effect of adding jumps into asset price processes on volatility derivatives pricing, and compare the performance of different pricing approaches.

\hspace{1cm}

{\bf Keywords}: variance swap, volatility swap, stochastic volatility, VIX future, convexity correction, Markov chain Monte Carlo

\section{Variance and Volatility Swaps for Stochastic Volatility Models}

In this section, we will focus on the variance and volatility swap pricing under stochastic volatility models and stochastic volatility models with jumps. The continuous variance strike under these models can be found through definition. However, the non-linearity property of square root function requires us to apply some techniques when evaluating the continuous volatility strike.  In the following sections, we will use the convexity correction formula to approximate volatility strikes, and the closed-form solutions developed in [Broadie and Jain, 2008] will also be presented, for the sake of completeness of the presentation.

\subsection{Heston Stochastic Volatility Model}
We assume all the price dynamics are modelled under risk neutral measure. Now we present an analysis of variance and volatility swaps under Heston stochastic volatility model. The Heston model [1993] is given by 
\begin{align}
\nonumber&dS_t=r S_tdt+\sqrt{V_t}S_tdW_t^1 \\ 
&dV_t=\kappa(\theta-V_t)dt+\sigma\sqrt{V_t}dW_t^2, \label{h1}
\end{align} 
the first equation in (\ref{h1}) gives the dynamics of the stock price $S_t$, $r$ is the spot interest rate, $\sqrt{V_t}$ is the volatility of stock price and the variance $V_t$ is a C-I-R process. $\kappa$ represents the speed of mean reversion, $\theta$ is the long run average of variance and $\sigma$ is the volatility of variance. $W_t^1$ and $W_t^2$ are two standard Brownian motion with correlation $\rho$.\

Instead of finding the explicit solution for Heston model, we will derive the mean and variance of $V_t$. To do so, we let 
\[
V_t=e^{-\kappa t}Z_t~ \text{with} ~Z_0=V_0,
\]
then
\begin{equation*}
       dV_t=-\kappa e^{-\kappa t}Z_tdt+e^{-\kappa t}dZ_t=-\kappa V_tdt+e^{-\kappa t}dZ_t,
\end{equation*}
and 
\begin{equation*}
       dZ_t=\kappa\theta e^{\kappa t}dt+\sigma e^{\kappa t}\sqrt{V_t}dW_t^2,
\end{equation*}
Take the integration and substitute initial value we have that
\begin{equation*}
       Z_t=\theta( e^{\kappa t}-1)+\sigma\int_0^t e^{\kappa s}\sqrt{V_s}dW_s^2+V_0
\end{equation*}
and therefore
\begin{equation}
    V_t=e^{-\kappa t}Z_t=\theta+(V_0-\theta) e^{-\kappa t}+\sigma e^{-\kappa t}\int_0^t e^{\kappa s}\sqrt{V_s}dW_s^2. \label{h0}
\end{equation}
Notice that expectation of It\^{o} integral is zero, we have that
\begin{equation}
    E(V_t)=e^{-\kappa t}Z_t=\theta+(V_0-\theta) e^{-\kappa t}.
\end{equation}
By using It\'{o} isometry property we obtain following formula for variance of $V_t$
\begin{align}
Var(V_t)&=\sigma^2 e^{-2\kappa t}Var(\int_0^t e^{\kappa s}\sqrt{V_s}dW_s^2)\nonumber\\
&=\sigma^2 e^{-2\kappa t}E(\int_0^t e^{2\kappa s}V_sds)\nonumber\\
&=\sigma^2 e^{-2\kappa t}\int_0^te^{2\kappa s}(\theta+(V_0-\theta) e^{-\kappa s})ds\nonumber\\
&=\frac{\theta\sigma^2}{2\kappa}(1-e^{-2\kappa t})+\frac{(V_0-\theta)\sigma^2}{\kappa}(e^{-\kappa t}-e^{-2\kappa t}).
\end{align}\

\subsubsection{Variance Swap for the Heston's Model}
In the case of Heston stochastic volatility model, continuous realized variance is given by
\begin{equation}
        RV_c(0,T)=\frac{1}{T}\int_0^TV_tdt \label{h11}
\end{equation}
and the fair continuous variance strike is given by
\begin{equation}
    K_{var}^*=E\big(\frac{1}{T}\int_0^TV_tdt\big)=\theta+\frac{V_0-\theta}{\kappa T}(1-e^{-\kappa T}).\label{h15}
\end{equation}

\subsubsection{Volatility Swap for the Heston's Model (Convexity Correction Method)}
The realized volatility is commonly calculated by using the square root of the realized variance define in (\ref{h11}), and the fair continuous volatility strike $K_{vol}^*$ for Heston's model is given by 
\begin{equation}
    K_{vol}^*=E\big(\sqrt{RV_c(0,T)}\big)=E\big(\sqrt{\frac{1}{T}\int_0^TV_tdt}\big).
\end{equation}

For fair continuous volatility strike $K_{vol}^*$, we have following theorem.
\begin{theorem}
The fair continuous volatility strike under Heston's model can be approximated as following
\begin{align}
    K_{vol}^*&\approx \sqrt{\theta+\frac{V_0-\theta}{\kappa T}(1-e^{-\kappa T})}-\frac{\frac{\sigma^2e^{-2\kappa T}}{2\kappa^3T^2}\big[(V_0-\theta)(2e^{2\kappa T}-4e^{\kappa T}\kappa T-2)\big]}{8\big(\theta+\frac{V_0-\theta}{\kappa T}(1-e^{-\kappa T})\big)^{\frac{3}{2}}}\nonumber\\
    &-\frac{\frac{\sigma^2e^{-2\kappa T}}{2\kappa^3T^2}\big[\theta(2e^{\kappa T}\kappa T-3e^{2\kappa T}+4e^{\kappa T}-1)\big]}{8\big(\theta+\frac{V_0-\theta}{\kappa T}(1-e^{-\kappa T})\big)^{\frac{3}{2}}}
\end{align}
\end{theorem}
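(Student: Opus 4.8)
The statement is a convexity-correction approximation, so the plan is to Taylor-expand the square root about the mean of the realized variance and then reduce the whole theorem to the computation of a single quantity: the variance of the time-integrated variance.

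Write $X:=RV_c(0,T)=\frac1T\int_0^T V_t\,dt$ and let $\mu:=E(X)$. First I would expand $f(x)=\sqrt x$ to second order about $\mu$; since $E(X-\mu)=0$ the linear term drops out on taking expectations, and with $f''(x)=-\tfrac14 x^{-3/2}$ this leaves the convexity correction
\[
K_{vol}^*=E(\sqrt X)\approx \sqrt{\mu}+\tfrac12 f''(\mu)\,Var(X)=\sqrt{\mu}-\frac{Var(X)}{8\,\mu^{3/2}}.
\]
By (\ref{h15}) the leading term is $\sqrt{\mu}=\sqrt{\theta+\frac{V_0-\theta}{\kappa T}(1-e^{-\kappa T})}$, and the factor $8\mu^{3/2}$ is precisely the denominator appearing in the claimed formula. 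Hence the entire content of the theorem is the evaluation of $Var(X)=\frac1{T^2}Var\big(\int_0^T V_t\,dt\big)$.

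Second, I would reduce the variance of the time integral to a double integral of the autocovariance,
\[
Var\Big(\int_0^T V_t\,dt\Big)=\int_0^T\!\!\int_0^T Cov(V_s,V_u)\,ds\,du=2\int_0^T\!\!\int_0^u Cov(V_s,V_u)\,ds\,du,
\]
and obtain $Cov(V_s,V_u)$ for $s\le u$ from the representation (\ref{h0}). Writing $V_t=\theta+(V_0-\theta)e^{-\kappa t}+\sigma e^{-\kappa t}I_t$ with $I_t=\int_0^t e^{\kappa r}\sqrt{V_r}\,dW_r^2$, only the martingale part carries randomness; since $I_s$ is $\mathcal F_s$-measurable and the increment $I_u-I_s$ has zero conditional mean, $Cov(I_s,I_u)=Var(I_s)=\int_0^s e^{2\kappa r}E(V_r)\,dr$ by the It\^{o} isometry. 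Substituting $E(V_r)=\theta+(V_0-\theta)e^{-\kappa r}$ gives
\[
Cov(V_s,V_u)=\sigma^2 e^{-\kappa(s+u)}\Big[\frac{\theta}{2\kappa}\big(e^{2\kappa s}-1\big)+\frac{V_0-\theta}{\kappa}\big(e^{\kappa s}-1\big)\Big],\qquad s\le u.
\]
As a consistency check this collapses to the stated expression for $Var(V_t)$ when $s=u=t$; note also that the correlation $\rho$ never enters, in agreement with its absence from the claimed formula.

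Finally I would evaluate the double integral over the triangle $\{0\le s\le u\le T\}$, integrating first in $s$ and then in $u$, and divide by $T^2$. Grouping the outcome by the coefficients of $(V_0-\theta)$ and of $\theta$ reproduces the two bracketed expressions of the statement multiplied by the common prefactor $\frac{\sigma^2 e^{-2\kappa T}}{2\kappa^3 T^2}$. I expect this last double integration to be the main obstacle: it is elementary but the bookkeeping of the mixed exponentials $e^{\kappa s},\,e^{2\kappa s},\,e^{-\kappa(s+u)}$ together with the polynomial-in-$T$ contributions from $\int u e^{-\kappa u}\,du$ is exactly where sign and factor slips occur, and pinning down the precise linear combination is the only genuinely delicate part. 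To guard against such errors I would cross-check the final expression against the small-$T$ scaling $Var\big(\int_0^T V_t\,dt\big)\sim\tfrac13\sigma^2 V_0 T^3$, which the correct coefficients must reproduce. The leading expansion is exact through second order, so the final ``$\approx$'' merely reflects the neglect of the third- and higher-order central moments of $X$.
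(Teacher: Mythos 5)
Your proposal follows exactly the paper's own route: the Brockhaus--Long second-order expansion yielding $K_{vol}^*\approx\sqrt{\mu}-\mathrm{Var}(X)/\big(8\mu^{3/2}\big)$ is the paper's (\ref{ag1})--(\ref{ag2}), and your computation of the covariance via the martingale property of $I_t$ and the It\^{o} isometry is identical in content to (\ref{h17})--(\ref{h19}); working with $\mathrm{Cov}(V_s,V_u)$ over the triangle $\{s\le u\}$ rather than with $E(V_tV_s)$ over the full square is only a cosmetic difference, and your expression $\mathrm{Cov}(V_s,V_u)=\sigma^2e^{-\kappa(s+u)}\big[\tfrac{\theta}{2\kappa}(e^{2\kappa s}-1)+\tfrac{V_0-\theta}{\kappa}(e^{\kappa s}-1)\big]$ for $s\le u$ is correct.

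The one step you leave unexecuted --- the final double integration --- is, however, precisely where your plan parts company with the statement, and your own small-$T$ check is the instrument that shows it. Carrying out your integration gives the $\theta$-bracket $2e^{2\kappa T}\kappa T-3e^{2\kappa T}+4e^{\kappa T}-1$, not $2e^{\kappa T}\kappa T-3e^{2\kappa T}+4e^{\kappa T}-1$ as printed in the theorem and in (\ref{h23}). The printed bracket (including the $e^{-2\kappa T}$ prefactor) expands to $-2(\kappa T)^2+O\big((\kappa T)^3\big)$, so the printed variance of realized variance tends to the negative constant $-\sigma^2\theta/\kappa$ as $T\to 0$ rather than vanishing like your benchmark $\mathrm{Var}\big(\tfrac1T\int_0^TV_t\,dt\big)\sim\tfrac13\sigma^2V_0T$; the corrected bracket expands to $\tfrac23(\kappa T)^3+O\big((\kappa T)^4\big)$ and reproduces that benchmark exactly. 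The slip propagates into the paper's numerics: the Bates example exhibits the impossible value $\mathrm{Var}=-0.0119$, and the Heston example gives $K_{vol}^*\approx 0.3012$, which exceeds $\sqrt{K_{var}^*}\approx 0.2402$ and so contradicts Jensen's inequality $E\sqrt{X}\le\sqrt{E(X)}$. In short: your method is sound and coincides with the paper's; completed faithfully, it proves the theorem with $2e^{2\kappa T}\kappa T$ in place of $2e^{\kappa T}\kappa T$, and you could not literally arrive at the formula as stated without committing the same algebraic error --- which your consistency check would (correctly) stop you from making.
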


\begin{proof}

To evaluate the fair discrete volatility strike under Heston's model, we need to find the risk neutral expectation of the square root of realized variance. Brockhaus and Long [2002] show that the fair volatility strike $K_{vol}^*$ can be approximated by using Taylor's expansion of $\sqrt{RV_c(0,T)}$ around $E\big(RV_c(0,T)\big)$ as following
\begin{align}
    \sqrt{RV_c(0,T)}&\approx \sqrt{E\big(RV_c(0,T)\big)}+\frac{RV_c(0,T)-E\big(RV_c(0,T)\big)}{2\sqrt{E\big(RV_c(0,T)\big)}}\nonumber\\
    &-\frac{\bigg(RV_c(0,T)-E\big(RV_c(0,T)\big)\bigg)^2}{8\sqrt{\bigg(E\big(RV_c(0,T)\big)\bigg)^3}}.\label{ag1}
\end{align}
Taking expectations under the risk-neutral measure on both sides of (\ref{ag1}) gives
\begin{align}
    K_{vol}^*=E(\sqrt{RV_c(0,T)})\approx \sqrt{E\big(RV_c(0,T)\big)}-\frac{Var\big(RV_c(0,T)\big)}{8\big(E(RV_c(0,T))\big)^{\frac{3}{2}}}.\label{ag2}
\end{align}
Thus, the fair volatility strike can be approximated by the convexity correction formula (\ref{ag2}), which only requires to find expectation and variance of realized variance. \

Specifically, for the Heston's model, the expectation of realized variance is given by (\ref{h15}) while the variance can be found through following steps.\

Since
\begin{align}
&~~~~~E\big(\int_0^t e^{\kappa x}\sqrt{V_x} dW_x^2\cdot\int_0^s e^{\kappa x}\sqrt{V_x} dW_x^2\big)\nonumber\\
&=\int_0^{t\wedge s}E(e^{2\kappa x}V_x)dx\nonumber\\
&=\frac{\theta}{2\kappa}e^{2\kappa(t\wedge s)}+\frac{V_0-\theta}{\kappa}e^{\kappa (t\wedge s)}-\frac{\theta}{2\kappa}-\frac{V_0-\theta}{\kappa},\label{h17}
\end{align}
and from (\ref{h0}) we have that
\begin{align}
E(V_tV_s)=&(V_0^2+\theta^2)e^{-\kappa(t+s)}-2V_0\theta e^{-\kappa (t+s)}+(V_0\theta-\theta^2)(e^{-\kappa t}+e^{-\kappa s})+\theta^2\nonumber\\
&+\sigma^2e^{-\kappa(t+s)}E\big(\int_0^t e^{\kappa x}\sqrt{V_x} dW_x^2\cdot\int_0^s e^{\kappa x}\sqrt{V_x} dW_x^2\big)\nonumber\\
=&(V_0-\theta)^2e^{-\kappa(t+s)}+(V_0\theta-\theta^2)(e^{-\kappa t}+e^{-\kappa s})+\theta^2+\sigma^2e^{-\kappa(t+s)}\frac{\theta}{2\kappa}e^{2\kappa(t\wedge s)}\nonumber\\
&+\sigma^2e^{-\kappa(t+s)}\frac{V_0-\theta}{\kappa}e^{\kappa (t\wedge s)}+\frac{\sigma^2e^{-\kappa(t+s)}(\theta-2V_0)}{2\kappa},\label{h19}
\end{align}
and,
\begin{align}
Var\big(\frac{1}{T}\int_0^TV_tdt\big)&=E\bigg(\big(\frac{1}{T}\int_0^TV_tdt\big)^2\bigg)-E^2\big(\frac{1}{T}\int_0^TV_tdt\big)\nonumber\\
&=\frac{1}{T^2}\int_0^T\int_0^TE(V_tV_s)dsdt-E^2\big(\frac{1}{T}\int_0^TV_tdt\big).\label{h21}
\end{align}

substitutes (\ref{h15}) and (\ref{h19}) into (\ref{h21}) we have
\begin{align}
Var\big(\frac{1}{T}\int_0^TV_tdt\big)&=\frac{\sigma^2e^{-2\kappa T}}{2\kappa^3T^2}\big[(V_0-\theta)(2e^{2\kappa T}-4e^{\kappa T}\kappa T-2)\nonumber\\
&+\theta(2e^{\kappa T}\kappa T-3e^{2\kappa T}+4e^{\kappa T}-1)\big].\label{h23}
\end{align}

From the convexity correction formula (\ref{ag2}) we have 
\begin{align}
    K_{vol}^*&\approx \sqrt{E\big(RV_c(0,T)\big)}-\frac{Var\big(RV_c(0,T)\big)}{8\big(E(RV_c(0,T))\big)^{\frac{3}{2}}}\nonumber\\
    &=\sqrt{\theta+\frac{V_0-\theta}{\kappa T}(1-e^{-\kappa T})}-\frac{\frac{\sigma^2e^{-2\kappa T}}{2\kappa^3T^2}\big[(V_0-\theta)(2e^{2\kappa T}-4e^{\kappa T}\kappa T-2)\big]}{8\big(\theta+\frac{V_0-\theta}{\kappa T}(1-e^{-\kappa T})\big)^{\frac{3}{2}}}\nonumber\\
    &-\frac{\frac{\sigma^2e^{-2\kappa T}}{2\kappa^3T^2}\big[\theta(2e^{\kappa T}\kappa T-3e^{2\kappa T}+4e^{\kappa T}-1)\big]}{8\big(\theta+\frac{V_0-\theta}{\kappa T}(1-e^{-\kappa T})\big)^{\frac{3}{2}}}.\label{h33}
\end{align}

\end{proof}

\subsubsection{Volatility Swap for the Heston's Model (Laplace Transform Method)}
It is convenient to use convexity correction formula to approximate the fair volatility strike, but the realized variance is required to be in the radius of convergence to make the first three terms in the Taylor expansion be a good approximation of square root function. Broadie and Jain [2008] claim that the 4th order terms in Taylor expansion of square root function are not small enough in the Heston stochastic volatility model, and hence the convexity correction formula will not provide a good estimate of the fair volatility strike. Instead of using Taylor expansion, they consider following formula for square root function
\begin{equation}
    \sqrt{x}=\frac{1}{2\sqrt{\pi}}\int_0^\infty\frac{1-e^{-sx}}{s^{\frac{3}{2}}}ds, \label{lp1}
\end{equation}
and by taking expectation on both sides of (\ref{lp1}) and using Fubini's theorem we have that
\begin{equation}
    E(\sqrt{x})=\frac{1}{2\sqrt{\pi}}\int_0^\infty\frac{1-E(e^{-sx})}{s^{\frac{3}{2}}}ds. \label{lp2}
\end{equation}

\begin{theorem}
 For the Heston stochastic volatility model, the fair continuous volatility strike is given by 
\begin{align}
     K_{vol}^*=E(\sqrt{RV_c(0,T)})=\frac{1}{2\sqrt{\pi}}\int_0^\infty\frac{1-E(e^{-sRV_c(0,T)})}{s^{\frac{3}{2}}}ds, 
\end{align}
where
\begin{equation}
    E(e^{-sRV_c(0,T)})=\exp\big(A(T,s)-B(T,s)V_0\big), \label{ht1}
\end{equation}
\begin{align}
     &A(T,s)=\frac{2\kappa\theta}{\sigma^2}\log\bigg(\frac{2\gamma(s)e^{\frac{T(\gamma(s)+\kappa)}{2}}}{(\gamma(s)+\kappa)(e^{T\gamma(s)}-1)+2\gamma(s)}\bigg),\nonumber\\
     &B(T,s)=\frac{2s(e^{T\gamma(s)}-1)}{T(\gamma(s)+\kappa)(e^{T\gamma(s)}-1)+2T\gamma(s)},\nonumber\\
     &\gamma(s)=\sqrt{\kappa^2+\frac{2\sigma^2s}{T}}.\nonumber
\end{align}
\end{theorem}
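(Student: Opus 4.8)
The first equality is simply the representation (\ref{lp2}) applied to the random variable $x = RV_c(0,T)$, whose validity was already justified through Fubini's theorem; so the whole content of the statement reduces to computing the Laplace transform $E(e^{-s\,RV_c(0,T)})$ of the normalized integrated variance. Writing $\lambda = s/T$, this is $E\big(\exp(-\lambda\int_0^T V_u\,du)\big)$, and the plan is to produce it in closed form by exploiting the affine (CIR) structure of the variance process in (\ref{h1}), rather than by any moment expansion.

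To this end I would introduce the conditional transform $f(t,v) = E\big[\exp(-\lambda\int_t^T V_u\,du)\mid V_t = v\big]$. By the Feynman--Kac theorem $f$ solves the backward parabolic PDE $\partial_t f + \kappa(\theta-v)\partial_v f + \tfrac{1}{2}\sigma^2 v\,\partial_{vv}f - \lambda v f = 0$ with terminal data $f(T,v)=1$, the term $-\lambda v f$ playing the role of a state-dependent discount. Since the coefficients are time-homogeneous, I would pass to the time-to-maturity $\tau = T - t$ and posit the exponential-affine ansatz $f = \exp(A(\tau) - B(\tau)v)$. Substituting and matching the coefficient of $v$ separately from the constant term decouples the PDE into the system
\begin{align}
B'(\tau) &= \lambda - \kappa B(\tau) - \tfrac{1}{2}\sigma^2 B(\tau)^2,\qquad B(0)=0,\nonumber\\
A'(\tau) &= -\kappa\theta B(\tau),\qquad A(0)=0.\nonumber
\end{align}
Evaluating at $\tau = T$, $v = V_0$ then gives $E(e^{-s\,RV_c(0,T)}) = \exp(A(T) - B(T)V_0)$, i.e. the claimed form with $A(T,s)=A(T)$ and $B(T,s)=B(T)$.

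The core computation is solving the Riccati equation for $B$. Its right-hand side vanishes at $B_\pm = (-\kappa\pm\gamma)/\sigma^2$ with $\gamma = \sqrt{\kappa^2 + 2\sigma^2\lambda} = \sqrt{\kappa^2 + 2\sigma^2 s/T}$, which is exactly the discriminant $\gamma(s)$ quoted in the statement. The cleanest route is the standard linearization $B = \tfrac{2}{\sigma^2}\,\phi'/\phi$, which turns the Riccati into the linear constant-coefficient equation $\phi'' + \kappa\phi' - \tfrac{1}{2}\sigma^2\lambda\,\phi = 0$ with characteristic roots $(-\kappa\pm\gamma)/2$. Imposing $B(0)=0$, equivalently $\phi'(0)=0$, fixes $\phi$ up to scale, and substituting back produces $B(T,s)$ explicitly. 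The equation for $A$ then integrates at once: since $B = \tfrac{2}{\sigma^2}(\log\phi)'$, one obtains $A(T) = -\tfrac{2\kappa\theta}{\sigma^2}\log\big(\phi(T)/\phi(0)\big)$, which is precisely the $\tfrac{2\kappa\theta}{\sigma^2}\log(\cdots)$ form in the statement, the bracketed argument being $\phi(0)/\phi(T)$.

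The main obstacle is not conceptual but the algebraic bookkeeping: after linearizing one must carry the two exponentials $e^{(-\kappa\pm\gamma)\tau/2}$ through the normalization $\phi'(0)=0$ and simplify the resulting combinations of $e^{T\gamma}-1$ terms until they match the stated $A(T,s)$ and $B(T,s)$ letter for letter, including the placement of the factor $1/T$ inside $\gamma$ and the $T$-factors in the denominator of $B$, both of which merely track the substitution $\lambda = s/T$. I would guard against sign and normalization slips by checking the limits $B(0)=0$, $A(0)=0$ and the small-$s$ behaviour: as $s\to 0$ one has $\gamma\to\kappa$, the logarithmic argument in $A(T,s)$ tends to $1$ and $B(T,s)\to 0$, so $E(e^{-s\,RV_c(0,T)})\to 1$, exactly as required.
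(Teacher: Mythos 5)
Your proof is correct and is essentially the paper's own route: the paper gives no derivation for this theorem, deferring to [Broadie and Jain, 2008, Prop.\ 3.1] and remarking that the Laplace transform ``can be justified by using Feynman-Kac formula,'' which is exactly the PDE, affine-ansatz, Riccati argument you carry out (with the first equality being the already-established identity (\ref{lp2})). Your algebra also checks out: taking $\phi(\tau)=(\gamma+\kappa)e^{(\gamma-\kappa)\tau/2}+(\gamma-\kappa)e^{-(\gamma+\kappa)\tau/2}$ (so that $\phi'(0)=0$ and $\phi(0)=2\gamma$), the substitutions $B=\tfrac{2}{\sigma^2}\phi'/\phi$, $A(T)=-\tfrac{2\kappa\theta}{\sigma^2}\log\big(\phi(T)/\phi(0)\big)$ and $\lambda=s/T$ reproduce the stated $A(T,s)$, $B(T,s)$ and $\gamma(s)$ letter for letter.
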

See [Broadie and Jain, 2008, Prop. 3.1, page 774] for more details.  The above formula for the Laplace transform of the continuous realized variance can be justified by using Feynman-Kac formula [Cairns, 2004].

\subsubsection{Numerical Example for the Heston's Model}
For a better understanding of swaps pricing under the Heston's model, we provide following numerical example. \

We choose the value evaluated in the empirical study in section 2.2, see Table 1,  for the Heston's model parameters, which are evaluated through Markov chain Monte Carlo algorithm based on historical data of the S\&P 500 index over the period from January 13, 2015 to January 13, 2017 (One can refer to section 2.2.1 for details). The estimation of parameters in Heston's model are
\[
r=-0.0018, \kappa=0.8519, \theta=0.1574, \sigma=0.2403, \rho=-0.8740, V_0=0.0093.
\]
Therefore, the fair continuous variance strike of a S\&P 500 variance swap with one year maturity is
\begin{align}
    \nonumber K_{var}^*&=\theta+\frac{V_0-\theta}{\kappa T}(1-e^{-\kappa T})\\
    \nonumber &=0.1574+\frac{0.0093-0.1574}{0.8519\cdot 1 }(1-e^{-0.8519\cdot 1})\\
    &\approx 0.0577, \label{sh1}
\end{align}

the related fair continuous volatility strike derived from the convexity correction formula is
\begin{align}
  \nonumber K_{vol}^* &\approx \sqrt{0.0577}-\frac{\frac{0.2403^2e^{-2\cdot 0.8519}}{2\cdot 0.8519^3}\big[(0.0093-0.1574)(2e^{2\cdot 0.8519}-4e^{0.8519}\cdot0.8519-2)\big]}{8\big(0.0577\big)^{\frac{3}{2}}}\\
 \nonumber  &~~~-\frac{\frac{0.2403^2e^{-2\cdot 0.8519}}{2\cdot 0.8519^3}\big[0.1574(2e^{0.8519}\cdot 0.8519-3e^{2\cdot 0.8519}+4e^{0.8519}-1)\big]}{8\big(0.0577\big)^{\frac{3}{2}}}\\
 &\approx 0.3012.  \label{sh2}
\end{align}

Now we use the second approach -- the closed-form solution developed from Laplace transform to evaluate the related continuous volatility strike. From Theorem 2, we have that
\begin{align}
\gamma(s)&=\sqrt{\kappa^2+\frac{2\sigma^2s}{T}}=\sqrt{0.8519^2+2\cdot 0.2403^2s},\nonumber\\
     A(T,s)&=\frac{2\kappa\theta}{\sigma^2}\log\bigg(\frac{2\gamma(s)e^{\frac{T(\gamma(s)+\kappa)}{2}}}{(\gamma(s)+\kappa)(e^{T\gamma(s)}-1)+2\gamma(s)}\bigg)\nonumber\\
     &=\frac{2\cdot 0.8519\cdot0.1574}{0.2403^2}\log\bigg(\frac{2\gamma(s)e^{\frac{\gamma(s)+0.8519}{2}}}{(\gamma(s)+0.8519)(e^{\gamma(s)}-1)+2\gamma(s)}\bigg),\nonumber\\
     B(T,s)&=\frac{2s(e^{T\gamma(s)}-1)}{T(\gamma(s)+\kappa)(e^{T\gamma(s)}-1)+2T\gamma(s)}\nonumber\\
&=\frac{2s(e^{\gamma(s)}-1)}{(\gamma(s)+0.8519)(e^{\gamma(s)}-1)+2\gamma(s)},\nonumber\\
E(e^{-sRV_c(0,T)})&=\exp\big(A(T,s)-B(T,s)V_0\big)\nonumber\\
&=\exp\big(A(T,s)-B(T,s)\cdot 0.0093\big),\nonumber
\end{align}
thus the fair continuous volatility strike evaluated through the Laplace transform method is
\begin{align}
     K_{vol}^*&=\frac{1}{2\sqrt{\pi}}\int_0^\infty\frac{1-E(e^{-sRV_c(0,T)})}{s^{\frac{3}{2}}}ds\nonumber\\
     &\approx 0.1202 .\nonumber
\end{align}

\subsection{Merton Jump Diffusion Model}
In this section we consider the dynamic of underlying asset prices follow the Merton jump-diffusion model 
\begin{equation}
    \frac{dS_t}{S_t^-}=(r-\lambda m)dt+\sigma dW_t+dJ_t \label{m1}
\end{equation}
where $J_t=\sum_{i=1}^{N_t}(Y_i-1)$ is a compound Poisson process with intensity $\lambda$. $Y_i\sim Log-Normal(a,b^2)$ represent the jump size of price, and $E(Y_i-1)=m$ while the parameters are related by the equation $e^{a+\frac{1}{2}b^2}=m+1$. Moreover, when the jumps occur at time $\tau_i$, we have $S(\tau_i^+)=S(\tau_i^-)Y_i$ \cite{mark}.\

For a asset which can be modeled by (\ref{m1}), the variance of price comes from two parts: the diffusion of price process and jumps in price. Thus the continuous realized variance over $[0,T]$ in Merton jump diffusion model can be expressed as
\begin{equation}
    RV_c(0,T)=\frac{1}{T}\int_0^T\sigma^2dt+\frac{1}{T}\bigg(\sum_{i=1}^{N(T)}(\ln(Y_i))^2\bigg), \label{m2}
\end{equation}
and the fair continuous variance strike is
\begin{align}
    K_{var}^*&=E\big(RV_c(0,T)\big)=\frac{1}{T}\int_0^T\sigma^2dt+\frac{1}{T}E\bigg(\sum_{i=1}^{N(T)}(\ln(Y_i))^2\bigg)=\sigma^2+\lambda(a^2+b^2), \label{m3}
\end{align}
which depends on the volatility parameter $\sigma$ as well as the distribution of jump size.\

To evaluate the continuous volatility strike in the Merton jump diffusion model, we can either use convexity correction method or Laplace transform.
\begin{theorem}
From the convexity correction formula (\ref{ag2}) we have following approximation for fair continuous volatility strike in Merton jump diffusion model
\begin{align}
    K_{vol}^*&\approx \sqrt{\sigma^2+\lambda(a^2+b^2)}-\frac{\lambda(a^4+6a^2b^2+3b^4)}{8T\big(\sigma^2+\lambda(a^2+b^2)\big)^{\frac{3}{2}}}.
\end{align}
\end{theorem}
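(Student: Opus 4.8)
The plan is to apply the convexity correction formula (\ref{ag2}) directly, so the whole task reduces to computing the expectation and variance of the realized variance (\ref{m2}). The expectation is already supplied by (\ref{m3}), namely $E(RV_c(0,T)) = \sigma^2 + \lambda(a^2+b^2)$, which immediately furnishes the leading term $\sqrt{\sigma^2+\lambda(a^2+b^2)}$ and the denominator $(\sigma^2+\lambda(a^2+b^2))^{3/2}$ of the correction. Everything nontrivial therefore hinges on $Var(RV_c(0,T))$. Since the diffusion contribution $\tfrac{1}{T}\int_0^T \sigma^2\,dt = \sigma^2$ is deterministic, it drops out of the variance, leaving
\[
Var(RV_c(0,T)) = \frac{1}{T^2}\,Var\!\left(\sum_{i=1}^{N(T)}(\ln Y_i)^2\right).
\]

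First I would treat this as the variance of a compound Poisson sum, setting $X_i := (\ln Y_i)^2$ with the $X_i$ i.i.d. and independent of $N(T)\sim \text{Poisson}(\lambda T)$. The key identity, obtained from the law of total variance together with $E(N(T)) = Var(N(T)) = \lambda T$, is
\[
Var\!\left(\sum_{i=1}^{N(T)} X_i\right) = E(N(T))\,Var(X_1) + Var(N(T))\,E(X_1)^2 = \lambda T\, E(X_1^2).
\]
This collapses the problem to a single moment, $E(X_1^2) = E\big((\ln Y_i)^4\big)$.

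Next I would compute that fourth moment. Because $Y_i \sim \text{Log-Normal}(a,b^2)$, the variable $\ln Y_i$ is $\text{Normal}(a,b^2)$, and the fourth raw moment of such a variable is the standard expression $E\big((\ln Y_i)^4\big) = a^4 + 6a^2 b^2 + 3b^4$, which one verifies by expanding $(a+bZ)^4$ with $Z\sim N(0,1)$ and using $E(Z^2)=1$, $E(Z^4)=3$ while the odd moments vanish. Substituting back gives $Var\big(\sum_{i=1}^{N(T)}(\ln Y_i)^2\big) = \lambda T (a^4+6a^2b^2+3b^4)$, hence $Var(RV_c(0,T)) = \lambda(a^4+6a^2b^2+3b^4)/T$. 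Inserting $E(RV_c(0,T))$ and $Var(RV_c(0,T))$ into (\ref{ag2}) then yields the claimed approximation.

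No step presents a genuine obstacle; the only places demanding care are getting the compound Poisson variance identity right, in particular retaining the $Var(N(T))\,E(X_1)^2$ term, which is precisely what merges the two variance contributions into the single second moment $\lambda T\,E(X_1^2)$, and not mis-stating the normal fourth moment. The rigor of the result is exactly that of the convexity correction itself, which is an approximation rather than an exact identity, so the conclusion properly inherits the $\approx$ sign from (\ref{ag2}).
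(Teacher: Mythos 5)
Your proposal is correct and follows essentially the same route as the paper's proof: isolate the compound Poisson term (the deterministic diffusion part contributing nothing to the variance), use $Var\big(\sum_{i=1}^{N(T)}(\ln Y_i)^2\big)=\lambda T\,E\big((\ln Y_i)^4\big)=\lambda T(a^4+6a^2b^2+3b^4)$, and substitute into the convexity correction formula (\ref{ag2}). The only difference is that you spell out the compound Poisson variance identity and the normal fourth moment, which the paper takes as known.
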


\begin{proof}
Since
\begin{align}
Var\big(RV_c(0,T)\big)&=Var\bigg(\frac{1}{T}\int_0^T\sigma^2dt+\frac{1}{T}\big(\sum_{i=1}^{N(T)}(\ln(Y_i))^2\big)\bigg)\nonumber\\
&=Var\bigg(\frac{1}{T}\big(\sum_{i=1}^{N(T)}(\ln(Y_i))^2\big)\bigg)\nonumber\\
&=\frac{1}{T^2}\cdot \lambda T\cdot E((\ln(Y_i))^4)\nonumber\\
&=\frac{\lambda(a^4+6a^2b^2+3b^4)}{T}, \label{what}
\end{align}
 substitute (\ref{m3}) and  (\ref{what}) into convexity correction formula, we have that
\begin{align}
    K_{vol}^*&\approx \sqrt{E\big(RV_c(0,T)\big)}-\frac{Var\big(RV_c(0,T)\big)}{8\big(E(RV_c(0,T))\big)^{\frac{3}{2}}}\nonumber\\
    &=\sqrt{\sigma^2+\lambda(a^2+b^2)}-\frac{\lambda(a^4+6a^2b^2+3b^4)}{8T\big(\sigma^2+\lambda(a^2+b^2)\big)^{\frac{3}{2}}}.\label{h330}
\end{align}
\end{proof}

\begin{theorem}
By applying Laplace transform method, we have following evaluation for the fair continuous volatility strike in Merton jump diffusion model
\begin{align}
     K_{vol}^*=E(\sqrt{RV_c(0,T)})=\frac{1}{2\sqrt{\pi}}\int_0^\infty\frac{1-E(e^{-sRV_c(0,T)})}{s^{\frac{3}{2}}}ds, \label{lp3}
\end{align}
where
\begin{equation*}
    E(e^{-sRV_c(0,T)})=\exp\bigg(-s\sigma^2+\lambda T\big(\frac{\exp(\frac{-sa^2}{T+2sb^2})}{\sqrt{1+\frac{2sb^2}{T}}}-1\big)\bigg) .
\end{equation*}
\end{theorem}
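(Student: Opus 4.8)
The plan is to invoke the Laplace-transform identity (\ref{lp2}) with $x=RV_c(0,T)$, so that
\[
K_{vol}^*=E\big(\sqrt{RV_c(0,T)}\big)=\frac{1}{2\sqrt{\pi}}\int_0^\infty\frac{1-E(e^{-sRV_c(0,T)})}{s^{\frac{3}{2}}}ds,
\]
exactly as in the Heston Laplace-transform theorem. Everything then reduces to evaluating the single Laplace transform $E(e^{-sRV_c(0,T)})$ of the realized variance (\ref{m2}). First I would note that the diffusive contribution $\frac{1}{T}\int_0^T\sigma^2dt=\sigma^2$ is deterministic, so it factors out as $e^{-s\sigma^2}$, leaving $E(e^{-sRV_c(0,T)})=e^{-s\sigma^2}\,E\big(\exp(-\tfrac{s}{T}\sum_{i=1}^{N(T)}(\ln Y_i)^2)\big)$.

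Next I would handle the jump part, which is a compound Poisson sum of the i.i.d.\ variables $(\ln Y_i)^2$. Conditioning on $N(T)$, using $N(T)\sim\mathrm{Poisson}(\lambda T)$ and the independence of the $Y_i$, the standard compound-Poisson Laplace transform gives
\[
E\Big(\exp\big(-\tfrac{s}{T}\textstyle\sum_{i=1}^{N(T)}(\ln Y_i)^2\big)\Big)=\exp\big(\lambda T(\varphi(s)-1)\big),\qquad \varphi(s)=E\big(e^{-\frac{s}{T}(\ln Y)^2}\big).
\]
So the whole computation collapses to the one-dimensional Gaussian integral $\varphi(s)$, since $\ln Y\sim N(a,b^2)$.

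The one genuinely computational step is evaluating $\varphi(s)=\frac{1}{\sqrt{2\pi}\,b}\int_{-\infty}^\infty e^{-\frac{s}{T}x^2}e^{-\frac{(x-a)^2}{2b^2}}dx$. I would complete the square in the combined quadratic exponent $-(\frac{s}{T}+\frac{1}{2b^2})x^2+\frac{a}{b^2}x-\frac{a^2}{2b^2}$ and apply $\int e^{-Ax^2+Bx}dx=\sqrt{\pi/A}\,e^{B^2/(4A)}$. Collecting terms, the prefactor becomes $(1+\frac{2sb^2}{T})^{-1/2}$ and the exponent simplifies to $-sa^2/(T+2sb^2)$, yielding $\varphi(s)=\exp(\frac{-sa^2}{T+2sb^2})/\sqrt{1+\frac{2sb^2}{T}}$. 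Substituting back produces exactly the claimed
\[
E(e^{-sRV_c(0,T)})=\exp\Big(-s\sigma^2+\lambda T\big(\tfrac{\exp(-sa^2/(T+2sb^2))}{\sqrt{1+2sb^2/T}}-1\big)\Big).
\]

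The steps are all elementary once the structure is seen; the only places requiring care are the algebraic bookkeeping in completing the square (and checking $1+\frac{2sb^2}{T}>0$ for all $s\ge 0$, which holds trivially), together with the justification of the integral representation itself — i.e.\ verifying that Fubini applies and that the improper integral in $s$ converges at both endpoints. Convergence at infinity is controlled because $E(e^{-sRV_c(0,T)})\le e^{-s\sigma^2}\to 0$ (the jump factor is bounded by $1$ since $\varphi(s)\le 1$), while near $s=0$ the numerator $1-E(e^{-sRV_c(0,T)})=O(s)$ dominates the $s^{-3/2}$ singularity; these are precisely the hypotheses under which (\ref{lp2}) was obtained, so I expect the representation to transfer to the jump setting without difficulty. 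The main, rather mild, obstacle is thus just the Gaussian moment computation $\varphi(s)$, carried out by completing the square as above.
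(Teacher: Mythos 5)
Your proposal is correct and follows essentially the same route as the paper: apply the integral representation (\ref{lp2}) to $RV_c(0,T)$, factor out the deterministic term $e^{-s\sigma^2}$, condition on $N(T)$ to get the compound-Poisson form $\exp\big(\lambda T(\varphi(s)-1)\big)$, and evaluate $\varphi(s)=E\big(e^{-\frac{s}{T}(\ln Y)^2}\big)$ for $\ln Y\sim N(a,b^2)$. If anything, you are more complete than the paper, which leaves the completing-the-square Gaussian integral and the Fubini/convergence justification implicit.
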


\begin{proof}
Since 
\begin{equation*}
    RV_c(0,T)=\sigma^2+\frac{1}{T}\bigg(\sum_{i=1}^{N(T)}(\ln(Y_i))^2\bigg), 
\end{equation*}
and 
\[
\ln(Y_i)\sim N(a,b^2),
\]
we have that
\begin{align}
     &E\bigg(\exp\big(-s(\sigma^2+\frac{1}{T}\sum_{i=1}^{N(T)}(\ln(Y_i))^2)\big)\bigg)\nonumber\\
     &=e^{-s\sigma^2}E\bigg(E\big(\exp\{-\frac{s}{T}\sum_{i=1}^{N(T)}(\ln(Y_i))^2\}|N(T)=n\big)\bigg)\nonumber\\
     &=e^{-s\sigma^2}\sum_{n=1}^{\infty}\frac{(\lambda T)^ne^{-\lambda T}}{n!}E\big(\exp\{-\frac{s}{T}\sum_{i=1}^{N(T)}(\ln(Y_i))^2\}\big)\nonumber\\
     &=\exp\bigg(-s\sigma^2+\lambda T\big(\frac{\exp(\frac{-sa^2}{T+2sb^2})}{\sqrt{1+\frac{2sb^2}{T}}}-1\big)\bigg).\nonumber
\end{align}
\end{proof}
See [Broadie and Jain, 2008, Prop. 3.1, page 771 and Prop. 5.1, page 774] for more details.

\subsubsection{Numerical Example for the Merton's Model}
Now we provide a numerical example for Merton's model. The parameters are evaluated through MCMC algorithm based on historical data of the S\&P 500 index over the period from January 13, 2015 to January 13, 2017. Let
\[
\lambda=0.0038, a=-0.0001, b^2=0.05, r=-0.0044, \sigma=0.1.
\]
Then, the fair continuous variance strike for a variance swap with maturity of one year is
\[
 K_{var}^*=\sigma^2+\lambda(a^2+b^2)\approx 0.0102.
 \]
 Using Theorem 3, we have following evaluation for fair continuous volatility strike
 \[
   K_{vol}^*\approx \sqrt{\sigma^2+\lambda(a^2+b^2)}-\frac{\lambda(a^4+6a^2b^2+3b^4)}{8T\big(\sigma^2+\lambda(a^2+b^2)\big)^{\frac{3}{2}}}\approx 0.097,
   \]
and the volatility strike evaluated from Laplace transform method is 
\begin{align}
 K_{vol}^*&=E(\sqrt{RV_c(0,T)})=\frac{1}{2\sqrt{\pi}}\int_0^\infty\frac{1-E(e^{-sRV_c(0,T)})}{s^{\frac{3}{2}}}ds\nonumber\\
&=\frac{1}{2\sqrt{\pi}}\int_0^\infty\frac{1-\exp\bigg(-s\sigma^2+\lambda T\big(\frac{\exp(\frac{-sa^2}{T+2sb^2})}{\sqrt{1+\frac{2sb^2}{T}}}-1\big)\bigg)}{s^{\frac{3}{2}}}ds\approx 0.0246    .\nonumber
\end{align}

\subsection{Bates Jump Diffusion Model}
Heston's and Merton's models are combined by Bates[1996] in 1996, who proposed the stochastic volatility with jumps model as following
\begin{align}
\nonumber &\frac{dS_t}{S_t^-}=(r-\lambda m)dt+\sqrt{V_t}dW_t^1+dJ_t, \\
 &dV_t=\kappa(\theta-V_t)dt+\sigma\sqrt{V_t}dW_t^2, \label{b1}
\end{align} 
where the meanings of parameters are same as in Heston's stochastic volatility model (\ref{h1}) and $J_t$ is a compound Poisson process with the same properties as in Merton jump diffusion model (\ref{m1}). Moreover, we assume that the jump process and Brownian motions are independent.\

Similar to the Merton's jump diffusion model (\ref{m1}), the variance of price comes from the diffusion of price process and jumps in price. Thus the continuous realized variance over $[0,T]$ in Bates jump diffusion model is
\begin{equation}
    RV_c(0,T)=\frac{1}{T}\int_0^TV_tdt+\frac{1}{T}\bigg(\sum_{i=1}^{N(T)}(\ln(Y_i))^2\bigg), \label{b2}
\end{equation}
and the fair continuous variance strike is
\begin{align}
 \nonumber   K_{var}^*&=E\big(RV_c(0,T)\big)=E\big(\frac{1}{T}\int_0^TV_tdt\big)+\frac{1}{T}E\bigg(\sum_{i=1}^{N(T)}(\ln(Y_i))^2\bigg)\\
 &=\theta+\frac{V_0-\theta}{\kappa T}(1-e^{-\kappa T})+\lambda(a^2+b^2). \label{b3}
\end{align}

Now we use both the convexity correction method and the Laplace transform method to evaluate the continuous volatility strike in the Bates jump diffusion model.

\begin{theorem}
From the convexity correction formula (\ref{ag2}) we have following approximation for fair continuous volatility strike under Bates jump diffusion model
\begin{align}
    K_{vol}^*&\approx \sqrt{E\big(RV_c(0,T)\big)}-\frac{Var\big(\frac{1}{T}\int_0^TV_tdt\big)}{8\big(E(RV_c(0,T))\big)^{\frac{3}{2}}}-\frac{Var\bigg(\frac{1}{T}\big(\sum_{i=1}^{N(T)}(\ln(Y_i))^2\big)\bigg)}{8\big(E(RV_c(0,T))\big)^{\frac{3}{2}}}.\label{b4}
\end{align}
where $E(RV_c(0,T))$ and $Var\big(\frac{1}{T}\int_0^TV_tdt\big)$ are given by (\ref{b3}) and (\ref{h23}) respectively, and 
\[
Var\bigg(\frac{1}{T}\big(\sum_{i=1}^{N(T)}(\ln(Y_i))^2\big)\bigg)=\frac{\lambda}{T}\cdot(a^4+6a^2b^2+3b^4).
\]
\end{theorem}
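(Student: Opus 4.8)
The plan is to apply the convexity correction formula (\ref{ag2}) directly to the Bates realized variance (\ref{b2}), so that the entire task reduces to supplying $E\big(RV_c(0,T)\big)$ and $Var\big(RV_c(0,T)\big)$. The mean is already given by (\ref{b3}), so the only genuinely new ingredient is the variance. First I would write
\[
RV_c(0,T)=\underbrace{\frac{1}{T}\int_0^TV_tdt}_{=:D}+\underbrace{\frac{1}{T}\sum_{i=1}^{N(T)}(\ln Y_i)^2}_{=:J},
\]
and note that the diffusion term $D$ is a functional of the Brownian motion $W^2$ driving the CIR variance process in (\ref{b1}), whereas the jump term $J$ is a functional of the Poisson process $N$ and the jump magnitudes $Y_i$.

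The key step is to invoke the model assumption that the jump process and the Brownian motions are independent. This makes $D$ and $J$ independent random variables, so their covariance vanishes and the variance splits additively:
\[
Var\big(RV_c(0,T)\big)=Var(D)+Var(J)=Var\Big(\frac{1}{T}\int_0^TV_tdt\Big)+Var\Big(\frac{1}{T}\sum_{i=1}^{N(T)}(\ln Y_i)^2\Big).
\]
The first summand is precisely the Heston-model quantity computed earlier and is available in closed form from (\ref{h23}); no recomputation is needed, since the variance dynamics in (\ref{b1}) coincide with those in (\ref{h1}).

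For the second summand I would reuse the compound-Poisson calculation from the proof of Theorem 3, equation (\ref{what}): conditioning on $N(T)$ and applying the standard variance identity $Var\big(\sum_{i=1}^{N(T)} X_i\big)=\lambda T\,E(X_i^2)$ with $X_i=(\ln Y_i)^2$ gives $Var(J)=\frac{1}{T^2}\,\lambda T\,E\big((\ln Y_i)^4\big)$. Since $\ln Y_i\sim N(a,b^2)$, its fourth moment equals $a^4+6a^2b^2+3b^4$, so $Var(J)=\frac{\lambda}{T}(a^4+6a^2b^2+3b^4)$. Substituting $E\big(RV_c(0,T)\big)$ from (\ref{b3}) together with the two variance pieces into (\ref{ag2}), and keeping the two variance contributions as separate fractions rather than combining them, yields the claimed three-term expression.

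I expect the main obstacle to be justifying the variance decomposition cleanly: one must confirm that $\int_0^TV_tdt$ depends only on the $W^2$-driven CIR process and carries no jump randomness, so that the cross term $Cov(D,J)$ is genuinely zero under the stated independence. Once that is in place, the remainder is bookkeeping — importing (\ref{h23}) verbatim and recycling the normal fourth-moment evaluation from Theorem 3.
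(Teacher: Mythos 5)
Your proposal is correct and follows essentially the same route as the paper's proof: decompose $RV_c(0,T)$ into the diffusion and jump parts, split the variance additively, import (\ref{h23}) for the diffusion term, and reuse the compound-Poisson calculation $\frac{1}{T^2}\lambda T\,E\big((\ln Y_i)^4\big)=\frac{\lambda(a^4+6a^2b^2+3b^4)}{T}$ from Theorem 3. If anything, you are slightly more careful than the paper, which splits $Var\big(RV_c(0,T)\big)$ into the two pieces without explicitly invoking the independence of the jump process and the Brownian motions to kill the cross-covariance.
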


\begin{proof}
Since 
\[
 RV_c(0,T)=\frac{1}{T}\int_0^TV_tdt+\frac{1}{T}\bigg(\sum_{i=1}^{N(T)}(\ln(Y_i))^2\bigg),
 \]
 by using the convexity correction formula, we have that
\begin{align}
    K_{vol}^*&\approx \sqrt{E\big(RV_c(0,T)\big)}-\frac{Var\big(RV_c(0,T)\big)}{8\big(E(RV_c(0,T))\big)^{\frac{3}{2}}}\nonumber\\
    &=\sqrt{E\big(RV_c(0,T)\big)}-\frac{Var\big(\frac{1}{T}\int_0^TV_tdt\big)}{8\big(E(RV_c(0,T))\big)^{\frac{3}{2}}}-\frac{Var\bigg(\frac{1}{T}\big(\sum_{i=1}^{N(T)}(\ln(Y_i))^2\big)\bigg)}{8\big(E(RV_c(0,T))\big)^{\frac{3}{2}}}, \nonumber
\end{align}
and 
\begin{align} 
Var\bigg(\frac{1}{T}\big(\sum_{i=1}^{N(T)}(\ln(Y_i))^2\big)\bigg)&=\frac{1}{T^2}\cdot \lambda T\cdot E((\ln(Y_i))^4)\nonumber\\
&=\frac{\lambda(a^4+6a^2b^2+3b^4)}{T}. \nonumber
\end{align}
\end{proof}

\begin{theorem}
By applying Laplace transform method, we have following evaluation for the fair continuous volatility strike of Bates jump diffusion model
\begin{align}
     K_{vol}^*=E(\sqrt{RV_c(0,T)})=\frac{1}{2\sqrt{\pi}}\int_0^\infty\frac{1-E(e^{-sRV_c(0,T)})}{s^{\frac{3}{2}}}ds, \label{b5}
\end{align}
where
\begin{align} 
\nonumber E(e^{-sRV_c(0,T)})&=E(e^{-\frac{s}{T}\int_0^T V_t dt})\cdot E(e^{-\frac{s}{T}\sum_{i=1}^{N(T)}(\ln Y_i)^2})\\
\nonumber &=\exp\bigg(A(T,s)-B(T,s)V_0+\lambda T\big(\frac{\exp(\frac{-sa^2}{T+2sb^2})}{\sqrt{1+\frac{2sb^2}{T}}}-1\big)\bigg).
\end{align}  
$A(T,s)$ ans $B(T,s)$ are given by (\ref{ht1}) .
\end{theorem}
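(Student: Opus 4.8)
The plan is to reduce the Bates Laplace transform to the product of the two transforms already established for the Heston and Merton models, and then feed the result into the representation formula (\ref{lp2}). Writing the realized variance (\ref{b2}) as an integrated-variance piece plus a jump piece, the quantity to be computed is
\begin{equation*}
E(e^{-sRV_c(0,T)})=E\bigg(\exp\Big(-\tfrac{s}{T}\int_0^T V_t\,dt-\tfrac{s}{T}\sum_{i=1}^{N(T)}(\ln Y_i)^2\Big)\bigg).
\end{equation*}

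First I would invoke the standing independence assumption of the Bates model (\ref{b1}): the jump process $J_t$, equivalently the pair consisting of $N(T)$ and the sizes $\{Y_i\}$, is independent of the Brownian motions $W^1,W^2$. Since the CIR dynamics for $V_t$ are driven only by $W^2$, the integral $\int_0^T V_t\,dt$ is a measurable functional of $W^2$ alone, hence independent of the jump stream. Therefore the expectation of the product splits into a product of expectations,
\begin{equation*}
E(e^{-sRV_c(0,T)})=E\big(e^{-\frac{s}{T}\int_0^T V_t\,dt}\big)\cdot E\big(e^{-\frac{s}{T}\sum_{i=1}^{N(T)}(\ln Y_i)^2}\big),
\end{equation*}
which is the first displayed line of the claim.

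Next I would identify each factor with a transform computed earlier. The first factor is exactly the Laplace transform of the continuous realized variance of the Heston model, so by the Heston Laplace result (Theorem 2) it equals $\exp(A(T,s)-B(T,s)V_0)$ with $A,B,\gamma$ as in (\ref{ht1}). The second factor is the jump contribution from the Merton model; it coincides with the Merton transform (Theorem 4) after removing the diffusive term, i.e. dropping the $e^{-s\sigma^2}$ prefactor. Conditioning on $N(T)=n$, summing against the Poisson law, and using the Gaussian moment generating function of $(\ln Y_i)^2$ then gives $\exp\big(\lambda T(\exp(-sa^2/(T+2sb^2))/\sqrt{1+2sb^2/T}-1)\big)$. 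Multiplying the two exponentials produces the closed form stated in the theorem, and substituting $x=RV_c(0,T)$ into (\ref{lp2}) yields the integral representation of $K_{vol}^*$.

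The only genuine step is the factorization, and the main obstacle is making the independence argument airtight: one must confirm that $\int_0^T V_t\,dt$ depends on $W^2$ only and not on the jump stream, and note that the correlation $\rho$ between $W^1$ and $W^2$ plays no role here because $RV_c(0,T)$ never involves $W^1$. Once that is settled, the remainder is bookkeeping, since both factors are quoted directly from Theorems 2 and 4 and no new estimates are required.
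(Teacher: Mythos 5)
Your proof is correct and follows exactly the route the paper intends: the independence factorization you establish is precisely the first displayed line of the theorem itself, and the two factors are the paper's Theorem 2 (Heston Laplace transform) and the jump part of Theorem 4 (Merton, with the $e^{-s\sigma^2}$ prefactor dropped), the paper itself offering no proof beyond the citation to [Broadie and Jain, 2008, Prop.\ 5.1]. Your explicit check that $\int_0^T V_t\,dt$ is a functional of $W^2$ alone, so that the correlation $\rho$ with $W^1$ plays no role in the factorization, is a detail the paper leaves implicit, and you handle it correctly.
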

See [Broadie and Jain, 2008, Prop. 5.1, page 774] for more details.

\subsubsection{Numerical Example for the Bates' Model}
Now we provide a numerical example for Bates' model. The parameters are evaluated through MCMC algorithm based on historical data of the S\&P 500 index over the period from January 13, 2015 to January 13, 2017. Let
$r=-0.0044, \kappa=0.8269, \theta=0.1793, \sigma=0.2916, \rho=-0.8734, \lambda=0.0038, a=-0.0001, b^2=0.05, V_0=0.0103.$
Then, the fair continuous variance strike for a variance swap with maturity of one year is
\[
K_{var}^*=\theta+\frac{V_0-\theta}{\kappa T}(1-e^{-\kappa T})+\lambda(a^2+b^2)\approx 0.0645.
 \]
The fair volatility strike evaluated from convexity correction formula is 
\begin{align}
K_{vol}^*&=\sqrt{E\big(RV_c(0,T)\big)}-\frac{Var\big(\frac{1}{T}\int_0^TV_tdt\big)}{8\big(E(RV_c(0,T))\big)^{\frac{3}{2}}}-\frac{Var\bigg(\frac{1}{T}\big(\sum_{i=1}^{N(T)}(\ln(Y_i))^2\big)\bigg)}{8\big(E(RV_c(0,T))\big)^{\frac{3}{2}}}\nonumber\\
&\approx\sqrt{0.0645}-\frac{-0.0119}{8\big(0.0645\big)^{\frac{3}{2}}}-\frac{0.00003}{8\big(0.0645\big)^{\frac{3}{2}}}\approx 0.3445.   \nonumber
    \end{align}

The fair volatility strike evaluated from Laplace transform method given in Theorem 9 is
\begin{align}
K_{vol}^*&=\frac{1}{2\sqrt{\pi}}\int_0^\infty\frac{1-E(e^{-sRV_c(0,T)})}{s^{\frac{3}{2}}}ds\nonumber\\
&=\frac{1}{2\sqrt{\pi}}\int_0^\infty\frac{1-\exp\big(A(T,s)-B(T,s)V_0+\lambda T\big(\frac{\exp(\frac{-sa^2}{T+2sb^2})}{\sqrt{1+\frac{2sb^2}{T}}}-1\big)\big)}{s^{\frac{3}{2}}}ds    \nonumber\\
& \approx 0.1312 . \nonumber
\end{align}

\subsection{L\'evy Based Heston Model}
\subsubsection{$\alpha$-stable distributions and L\'evy processes}

In probability theory, a distribution is said to be stable if a linear combination of two independent copies of a random sample has the same distribution, up to location and scale parameters. Specifically, the characteristic function of symmetric $\alpha$-stable distributed random variables has following form
\[
\phi(u)=e^{i\delta u-\sigma|u|^\alpha},
\]
where $\alpha\in(0,2]$ is the characteristic exponent(stability parameter) which determines the shape of the distribution, $\delta\in(-\infty,\infty)$ is the location parameter and $\sigma\in(0,\infty)$ is the dispersion, which measures the width of distribution. For $0<\alpha\le1$, $\delta$ is the median, while for $1<\alpha\le2$, $\delta$ is the mean. A symmetric $\alpha$-stable distribution is called standard if $\delta=0$ and $\sigma=1$. For more details about symmetric $\alpha$-stable distribution, one can refer to \cite{swishchuk1}.\

However, there is no closed form expression exists for general $\alpha$-stable distribution other than the L\'evy ($\alpha=1/2$), the Cauchy ($\alpha=1$) and the Gaussian ($\alpha=2$) distributions. Also, only moments of order less than $\alpha$ exist for the non-Gaussian family of $\alpha$-stable distribution. The fractional lower order moments with $\delta=0$ are given by
\[
     E|X|^p=D(p,\alpha)\sigma^{p/\alpha}~~for~0<p<\alpha
\]
where
\[
D(p,\alpha)=\frac{2^p\Gamma(\frac{p+1}{2})\Gamma(1-\frac{p}{\alpha})}{\alpha\sqrt{\pi}\Gamma(1-\frac{p}{2})}
\]
and $\Gamma(\cdot)$ is the Gamma distribution.\

One important characteristic of symmetrical $\alpha$-stable distribution is that the smaller $\alpha$ is, the heavier the tails of the $\alpha$-stable density. The heavy tail characteristic makes the distribution appropriate for modeling noise which is impulsive in nature, for example, electricity prices or volatility (See [Swishchuk, 2009]).

{\bf Definition 2.1.2} Let $\alpha\in(0,2]$, an $\alpha$-stable L\'evy process $L_t$ is a process such that $L_1$ has a strictly $\alpha$-stable distribution($i.e.$, $L_1\equiv S_\alpha (\sigma,\beta,\delta)$ for some $\alpha\in(0,2]\setminus\{1\},  \sigma\in\mathbb{R}_+, \beta\in[-1,1], \delta=0$ or $\alpha=1, \sigma\in\mathbb{R}_+, \beta=0, \delta\in\mathbb{R}$). We call $L_t$ is a symmetric $\alpha$-stable L\'evy process if the distribution of $L_1$ is symmetric $\alpha$-stable ($i.e.$, $L_1\equiv S_\alpha(\sigma,0,0)$ for some $\alpha\in(0,2], \sigma\in\mathbb{R}_+$). $L_t$ is $(T_t)_{t\in\mathbb{R}_+}$-adapted if $L_t$ is a constant on $[T_{t-},T_{t+}]$ for any $t\in\mathbb{R}_+.$ \

The $\alpha$-stable L\'evy processes are the only self-similar L\'evy processes such that $L(at)\stackrel{\text{Law}}{=}a^{1/\alpha}L(t), a\ge 0$. They are either Brownian motion or pure jump. For $1<\alpha<2$, we have $E(L_t)=\delta t$ where $\delta$ is the location parameter of the $\alpha$-stable distribution. For more details about properties of $\alpha$-stable L\'evy processes, one can refer to [Swishchuk, 2009].\

\subsubsection{Change of Time Method for the Stochastic Differential Equations Driven by L\'evy Processes}
Let $L_{a.s.}^\alpha$ denotes the family of all real measurable $\mathcal{F}_t$-adapted processes $a(t,\omega)$ on $\Omega\times[0,+\infty)$, such that for every $T>0$,
\[
\int_0^T|a(t,\omega)|^\alpha dt<+\infty~~a.s..
\]
Now we consider stochastic differential equations that have following form
\[
dX(t)=a(t,X(t-))dL(t),
\]
where $L(t)$ is an $\alpha$-stable L\'evy process.\

\begin{theorem}
Let $a\in L_{a.s.}^\alpha$ such that $T(u):=\int_0^u |a|^\alpha dt\to +\infty$ a.s. as $u\to +\infty$. If $\hat{T}(t):=inf\{u:T(u)>t\}$ and $\hat{\mathcal{F}}_t=\mathcal{F}_{\hat{T}(t)}$, then the time-changed stochastic integral $\hat{L}(t)=\int_0^{\hat{T}(t)}a dL(t)$ is an $\hat{\mathcal{F}}_t$ $\alpha$-stable L\'evy process, where $L(t)$ is $\mathcal{F}_t$-adapted $\alpha$-stable L\'evy process. Consequently, for each $t>0$, $\int_0^ta dL=\hat{L}(T(t))$ a.s., i.e., the stochastic integral with respect to a $\alpha$-stable L\'evy process is nothing but another $\alpha$-stable L\'evy process with randomly changed time scale.
\end{theorem}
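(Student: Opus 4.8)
The plan is to recognize this as the $\alpha$-stable L\'evy analogue of the Dambis--Dubins--Schwarz representation of a continuous local martingale as a time-changed Brownian motion, and to prove it through characteristic functions rather than through sample-path continuity (which is unavailable here because $L$ has jumps). The whole argument hinges on one structural fact about the driver: with the symmetric $\alpha$-stable process normalized so that $E[e^{i\xi L(u)}]=e^{-\sigma|\xi|^\alpha u}$, the stochastic integral satisfies
\begin{equation*}
E\left[\exp\left(i\xi\int_0^u a\,dL\right)\right]=\exp\left(-\sigma|\xi|^\alpha\int_0^u|a|^\alpha\,ds\right)=\exp\left(-\sigma|\xi|^\alpha T(u)\right),
\end{equation*}
equivalently that $M^\xi_u:=\exp\big(i\xi\int_0^u a\,dL+\sigma|\xi|^\alpha T(u)\big)$ is an $\mathcal F_u$-martingale. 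I would establish this first for simple (step) integrands, where it reduces to the independence and stationarity of the increments of $L$ together with the $\alpha$-stable characteristic function, and then extend it to general $a\in L^\alpha_{a.s.}$ by approximation.

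First I would record the elementary properties of the clock. Under $a\in L^\alpha_{a.s.}$ the map $u\mapsto T(u)=\int_0^u|a|^\alpha\,dt$ is continuous, nondecreasing and $\mathcal F_u$-adapted, and the hypothesis $T(u)\to+\infty$ guarantees that its right-continuous inverse $\hat T(t)=\inf\{u:T(u)>t\}$ is finite for every $t$. Because $T$ is adapted and continuous, $\{\hat T(t)\le u\}=\{T(u)\ge t\}\in\mathcal F_u$, so each $\hat T(t)$ is an $\mathcal F_u$-stopping time and $\hat{\mathcal F}_t=\mathcal F_{\hat T(t)}$ is a bona fide filtration. Continuity of $T$ also yields the identity $T(\hat T(t))=t$ for every $t$, which is exactly what converts the random clock evaluated at $u=\hat T(t)$ back into deterministic time $t$.

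The core of the proof is then an optional-sampling computation. Fixing $t_1<t_2$, I would apply optional sampling to $M^\xi$ at the stopping times $\hat T(t_1)\le\hat T(t_2)$ to obtain $E[M^\xi_{\hat T(t_2)}\mid\hat{\mathcal F}_{t_1}]=M^\xi_{\hat T(t_1)}$. Substituting $\hat L(t)=\int_0^{\hat T(t)}a\,dL$ and using $T(\hat T(t))=t$ collapses this to
\begin{equation*}
E\left[\exp\left(i\xi\big(\hat L(t_2)-\hat L(t_1)\big)\right)\,\Big|\,\hat{\mathcal F}_{t_1}\right]=\exp\left(-\sigma|\xi|^\alpha(t_2-t_1)\right).
\end{equation*}
Because the right-hand side is non-random, the increment $\hat L(t_2)-\hat L(t_1)$ is independent of $\hat{\mathcal F}_{t_1}$ and carries exactly the symmetric $\alpha$-stable law of a stable L\'evy process run for time $t_2-t_1$; this is precisely what it means for $\hat L$ to be an $\hat{\mathcal F}_t$ $\alpha$-stable L\'evy process. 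The representation $\int_0^t a\,dL=\hat L(T(t))$ a.s. then follows by the same substitution, since $\hat T(T(t))=t$ wherever $T$ is strictly increasing.

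The hard part will be the analytic justification underneath these formal steps, and it lives almost entirely in the martingale identity for $M^\xi$. Because $L$ has no second moment when $\alpha<2$, one cannot build $\int_0^u a\,dL$ or the exponential-martingale property through the usual $L^2$ isometry; instead the identity must be obtained at the level of characteristic functions -- first for step integrands, via independence and stationarity of the increments of $L$ and the $\alpha$-stable exponent, then extended to general $a\in L^\alpha_{a.s.}$ by an approximation controlling convergence in distribution rather than in $L^2$. Once $M^\xi$ is known to be a martingale the optional-sampling step is comparatively benign: stopping at $\hat T(t)\wedge n$ gives $|M^\xi_{\hat T(t)\wedge n}|=e^{\sigma|\xi|^\alpha T(\hat T(t)\wedge n)}\le e^{\sigma|\xi|^\alpha t}$, a uniform bound that licenses the passage to $n\to\infty$. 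The remaining wrinkle is bookkeeping: if $a$ vanishes on a set of positive measure, $T$ has flat stretches and $\hat T$ jumps, so one adopts the convention that $\hat L$ is held constant across those intervals, which preserves both $T(\hat T(t))=t$ and the representation $\int_0^t a\,dL=\hat L(T(t))$. The general strictly $\alpha$-stable (non-symmetric) case goes through verbatim with $\sigma|\xi|^\alpha$ replaced by the full L\'evy exponent, the homogeneity required for the time change being supplied by the self-similarity $L(at)\stackrel{\text{Law}}{=}a^{1/\alpha}L(t)$ recorded earlier.
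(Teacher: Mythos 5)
The paper itself contains no proof of this theorem: it is stated and immediately deferred to [Rosinski and Woyczinski, 1986]. So the relevant comparison is with the argument in that reference, and your proposal is essentially a reconstruction of it. The inner-clock theorem there rests on exactly the two pillars you identify: the L\'evy-type characterization that a c\`adl\`ag adapted process $X$ with $X_0=0$ for which $\exp\big(i\xi X_u+\sigma|\xi|^\alpha u\big)$ is a martingale for every $\xi$ is a symmetric $\alpha$-stable motion, and optional sampling of the exponential (local) martingale $M^\xi$ at the inverse-clock stopping times $\hat T(t_1)\le\hat T(t_2)$. Your uniform bound $|M^\xi_{u\wedge\hat T(t)}|\le e^{\sigma|\xi|^\alpha t}$, coming from $T(\hat T(t)\wedge n)\le t$, is the right device to legitimize the sampling step, and your verification that $\hat T(t)$ is an $\mathcal F_u$-stopping time via $\{\hat T(t)\le u\}=\{T(u)\ge t\}$ is correct. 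In outline, then, you have the standard proof.

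Three points need repair. First, your opening display is false as written: its left side is deterministic while its right side is random through $T(u)$; what is true, and what you actually use, is the martingale formulation you state immediately afterwards (equivalently $E\big[\exp\big(i\xi\int_0^u a\,dL+\sigma|\xi|^\alpha T(u)\big)\big]=1$ after suitable localization). Second, no ``convention'' holding $\hat L$ constant across flat stretches of $T$ is needed or even available, since $\hat L(t)=\int_0^{\hat T(t)}a\,dL$ is already fully defined; the correct observation is that $T$ flat on $[t,\hat T(T(t))]$ forces $\int_t^{\hat T(T(t))}|a|^\alpha\,ds=0$, hence $a=0$ a.e.\ there and $\int_t^{\hat T(T(t))}a\,dL=0$ a.s., which yields $\int_0^t a\,dL=\hat L(T(t))$ with no strict-monotonicity proviso. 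Third, and this is the one genuine mathematical error, the closing claim that the non-symmetric strictly stable case goes through ``verbatim'' is wrong: for skewness $\beta\neq 0$ the L\'evy exponent satisfies $\psi(a\xi)=|a|^\alpha\,\psi(\mathrm{sign}(a)\,\xi)$, so a sign-changing integrand flips the skewness of the contributed increments and the single clock $T(u)=\int_0^u|a|^\alpha\,dt$ cannot reproduce the law of $L$; one needs $L$ symmetric or $a\ge 0$ a.e.\ (the paper's application has $a=\sigma e^{\kappa s}\sqrt{V_s}\ge 0$, so the theorem is only invoked where this holds). Finally, be aware that the passage from simple to general $a\in L_{a.s.}^\alpha$, which you flag but do not execute, is where the real analytic content of Rosinski--Woyczinski lives -- when $\alpha<2$ there is no $L^2$ isometry and the construction and continuity in probability of the integral are controlled by decoupling inequalities -- so a complete write-up would still owe that step.
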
\
See [Rosinski and Woyczinski, 1986] for more details. 

\subsubsection{Variance Swaps for the L\'evy-based Heston Model}

Assume the price and variance of underlying asset satisfy following model
\begin{align}
\nonumber&dS_t=r S_tdt+\sqrt{V_t}S_tdW_t \\ 
&dV_t=\kappa(\theta-V_t)dt+\sigma\sqrt{V_t}dL_t, \label{kkk1}
\end{align} 
where parameters have same meanings as in (\ref{h1}) while $W_t$ and $L_t$ are independent Brownian motion and $\alpha$-stable L\'evy process with $\alpha\in(0,2]$. By using the same method as in Section 5.1 and the change of time method [Swishchuk, 2009], we have following solution for the second SDE 
\begin{equation}
    V_t=\theta+e^{-\kappa t}\big(V_0-\theta+\hat{L}(\hat{T}_t)\big) \label{hh0}
\end{equation}
where 
\[
\hat{L}(\hat{T}_t)=\int_0^t \sigma e^{\kappa s}\sqrt{V_s}dL_s
\]
and 
\[
\hat{T}_t=\sigma^\alpha\int_0^t [e^{\kappa\hat{T}_s}(V_0-\theta+\hat{L}({\hat{T}_s}))+\theta e^{2\kappa\hat{T}_s}]^{\alpha/2}ds.
\]

Thus the fair continuous variance strike $K_{var}^*=E\big(RV_c(0,T)\big)$ is given by
\begin{align}
  K_{var}^* &=  E\big(\frac{1}{T}\int_0^TV_tdt\big)=E\bigg(\frac{1}{T}\int_0^T\theta+e^{-\kappa t}\big(V_0-\theta+\hat{L}(\hat{T}_t)\big)dt\bigg)\nonumber\\
    (Fubini's~Theorem)&=\frac{1}{T}\int_0^T\theta+e^{-\kappa t}\big(V_0-\theta+E\big(\hat{L}(\hat{T}_t)\big)\big)dt\nonumber\\
    &=\theta+\frac{(1-e^{-\kappa T})(\kappa V_0-\kappa\theta+\delta)}{\kappa^2 T}-\frac{\delta e^{-\kappa T}}{\kappa}.
\end{align}

However, only moments of order less than $\alpha$ exist for the non-Gaussian family of $\alpha$-stable distribution, which means we are not able to evaluate the variance of the realized variance $RV_c(0,T)$. Therefore, the convexity correction method are not able to be used to find the continuous volatility strike under the L\'evy-based Heston model.

\subsubsection{Numerical Example for the L\'evy-based Heston model}
Assume the dynamics of an asset price can be modeled as in (\ref{kkk1}), where the driven L\'evy process is symmetric $\alpha$-stable ($i.e., \beta=\delta=0$),  $\mu=-0.0018, \kappa=0.8519, \theta=0.1574, \sigma=0.2403, \rho=-0.8740,$ and $V_0=0.0093$. Then, the fair continuous variance strike of a variance swap with maturity of one year is given by
\begin{align}
 K_{var}^* &=  \theta+\frac{(1-e^{-\kappa T})(\kappa V_0-\kappa\theta+\delta)}{\kappa^2 T}-\frac{\delta e^{-\kappa T}}{\kappa}\nonumber\\
&=0.1574+\frac{(1-e^{-0.8519})(0.8519\cdot0.0093-0.8519\cdot0.1574)}{0.8519^2}\nonumber\\
&\approx 0.0577, \nonumber
\end{align}
which is same as the variance strike in numerical example of Heston's model in section 1.1.4.

\section{VIX Futures Pricing}

In this section, we will consider a highly traded volatility derivative -- the VIX future. We will price the VIX future under Heston's and Bates' models, and evaluate the pricing performance of different models with different approaches by comparing the estimated future prices with the market future prices.


\subsection{VIX Futures}
The Volatility Index (VIX) introduced by the Chicago Board Options Exchange (CBOE) in 1993 has been considered as a key measure of the stock market volatility. The original CBOE Volatility Index was designed to measure the market's expectation of 30-day implied volatility by at-the-money S\&P 100 Index option prices. In 2003, CBOE together with Goldman Sachs, updated the VIX to reflect a new way to measure expected volatility, which is based on the S\&P 500 Index (SPX) and estimates expected volatility by averaging the weighted prices of SPX puts and calls over a wide range of strike prices. CBOE introduced the first exchange-traded VIX futures contract on March 24, 2004 and launched VIX options after two years. The trading in VIX options and futures are very active and has grown to over 800,000 contracts per day in just 10 years since the launch. See [CBOE, 2014]\

As described in the CBOE white paper [CBOE, 2014], the generalized formula used in the VIX calculation is
\begin{equation}
\text{VIX}_t^2=\bigg(\frac{2}{\tau}\sum_i \frac{\Delta K_i}{K_i^2} e^{r\tau} Q(K_i)-\frac{1}{\tau}(\frac{F}{K_0}-1)^2\bigg)\times 100^2,
\end{equation} \label{ns1}
where $\tau=\frac{30}{365}$, $K_i$ is the strike price of the $i$th out-of-money option in the calculation, $F$ is the forward index level at time $t$, $Q(K_i)$ denotes the mid-quote price of the out-of-money options at strike $K_i$ at time $t$, $K_0$ is the first strike below the forward index level, and $r$ is the risk-free rate with maturity $\tau$. Mathematically, (\ref{ns1}) can be recognized as a simple discretization of the forward integral over $[t, t+\tau]$ [Lin, 2007], $i.e.$,
\begin{equation}
\text{VIX}_t^2=\bigg(\frac{\xi_1}{\tau}E_t\big(\int_t^{t+\tau}V_s ds\big)+\xi_2\bigg)\times 100^2, \label{ns2}
\end{equation} 
where $V_t$ is the instantaneous variance, $E(X)$ is the expectation under the risk-neutral probability measure and $E_t(X):=E(X|\mathcal{F}_t)$, $\xi_1$ and $\xi_2$ are coefficients determined by the price dynamics (See Appendix A in [Lin, 2007] for more details about the coefficient).\

The expression of the VIX squared can also be given in terms of risk-neutral expectation of the log contract 
[Zhu and Lian, 2011]
\begin{equation}
\text{VIX}_t^2=-\frac{2}{\tau} E_t \big(\ln(\frac{S_{t+\tau}}{S_t e^{r\tau}})\big)\times 100^2. \label{zhuu1}
\end{equation}

Carr and Wu [2006] showed that the price of a VIX future is a martingale under the risk-neutral measure, and the value of a VIX future contract with maturity $T$ is
\begin{equation}
F(T)=E(\text{VIX}_T).
\end{equation} \label{ns3}

\subsubsection{VIX Futures Pricing under the Heston's Model}
Assume the dynamics of S\&P 500 Index can be approximated by Heston's stochastic volatility model as in (\ref{h1}), in which we have that $\xi_1=1$ and $\xi_2=0$. Thus the VIX squared in this case is
\begin{align}
\nonumber\text{VIX}_t^2&=\frac{100^2}{\tau}E_t\big(\int_t^{t+\tau}V_s ds\big)\\
&=100^2\times \bigg(\theta+\frac{V_t-\theta}{\kappa\tau}(1-e^{-\kappa\tau})\bigg), \label{ns4}
\end{align} 
and the present value of a VIX future contract with maturity $T$ is
\begin{align} 
\nonumber F(T)&=E(\text{VIX}_T)\\
&=100\times E\bigg(\sqrt{\theta+\frac{V_T-\theta}{\kappa\tau}(1-e^{-\kappa\tau})}\bigg).  \label{ns5}
\end{align}

Now we use convexity correction formula and Laplace transform method to evaluate $F(T)$ separately.\

\begin{theorem}
Applying the convexity correction formula (\ref{ag2}) to (\ref{ns5}), we have following approximation for the value of VIX future contract,
\begin{align}
 \nonumber   F(T)&\approx100\times\bigg(\sqrt{\theta-\frac{\theta(1-e^{-\kappa\tau})}{\kappa\tau}+\frac{1-e^{-\kappa\tau}}{\kappa\tau}\cdot E(V_T)}\nonumber\\
 &-\frac{(1-e^{-\kappa\tau})^2Var(V_T)}{8\kappa^2\tau^2\big(\theta-\frac{\theta(1-e^{-\kappa\tau})}{\kappa\tau}+\frac{1-e^{-\kappa\tau}}{\kappa\tau}\cdot E(V_T)\big)^{3/2}}\bigg), 
\end{align} 
where 
\[
E(V_T)=\theta+(V_0-\theta) e^{-\kappa T}
\]
and 
\begin{equation}
Var(V_T)=\frac{\theta\sigma^2}{2\kappa}(1-e^{-2\kappa T})+\frac{(V_0-\theta)\sigma^2}{\kappa}(e^{-\kappa T}-e^{-2\kappa T}). \label{ns6}
\end{equation}
\end{theorem}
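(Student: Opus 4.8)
The plan is to observe that the quantity inside the square root in (\ref{ns5}) is an \emph{affine} function of the single random variable $V_T$, so the entire problem collapses to applying the convexity correction formula (\ref{ag2}) with the already-known mean and variance of $V_T$.

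First I would set
$$X := \theta + \frac{V_T-\theta}{\kappa\tau}(1-e^{-\kappa\tau}) = c_0 + c_1 V_T, \qquad c_0 := \theta - \frac{\theta(1-e^{-\kappa\tau})}{\kappa\tau}, \quad c_1 := \frac{1-e^{-\kappa\tau}}{\kappa\tau},$$
so that by (\ref{ns5}) we have $F(T) = 100\,E(\sqrt{X})$. Applying the convexity correction formula (\ref{ag2}) with the radicand $X$ playing the role of $RV_c(0,T)$ gives
$$E(\sqrt{X}) \approx \sqrt{E(X)} - \frac{Var(X)}{8\,(E(X))^{3/2}}.$$

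Next I would evaluate the two moments using the affine structure. By linearity of expectation, $E(X) = c_0 + c_1 E(V_T)$, which is exactly the expression under the square root in the statement; and by the scaling property of variance, $Var(X) = c_1^2\,Var(V_T) = \frac{(1-e^{-\kappa\tau})^2}{\kappa^2\tau^2}\,Var(V_T)$, which supplies the factor $(1-e^{-\kappa\tau})^2/(\kappa^2\tau^2)$ in the correction term. It then remains only to substitute the closed forms $E(V_T) = \theta + (V_0-\theta)e^{-\kappa T}$ and the variance $Var(V_T)$ from (\ref{ns6}) — both obtained at the start of the Heston subsection from the representation (\ref{h0}) — and to multiply through by $100$, which reproduces the claimed approximation verbatim.

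There is no genuine analytic obstacle here: once the affine reduction $X = c_0 + c_1 V_T$ is in place, everything reduces to linear bookkeeping on moments that have already been computed. The only point requiring care is conceptual rather than computational, namely that (\ref{ag2}) is itself a truncated second-order Taylor expansion of the square root, so the resulting identity is necessarily an approximation ("$\approx$") and is reliable only when $E(X) > 0$ and the law of $V_T$ is concentrated closely enough about its mean for the neglected higher-order terms to be negligible.
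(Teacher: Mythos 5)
Your proposal is correct and follows essentially the same route as the paper's proof: both apply the convexity correction formula (\ref{ag2}) directly to the radicand $\theta+\frac{V_T-\theta}{\kappa\tau}(1-e^{-\kappa\tau})$, which is affine in $V_T$, and then reduce via linearity of expectation and $Var(c_0+c_1V_T)=c_1^2\,Var(V_T)$ to the previously computed moments of $V_T$. Your explicit introduction of $c_0,c_1$ and the remark on the validity conditions of the approximation are merely a cleaner write-up of the identical argument.
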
\
 \begin{proof}
 
 By applying convexity correction formula, we have that
 \begin{align}
 \nonumber   F(T)&=100\times E\bigg(\sqrt{\theta+\frac{V_T-\theta}{\kappa\tau}(1-e^{-\kappa\tau})}\bigg) \nonumber\\
\nonumber&\approx 100\times\bigg(\sqrt{E\big(\theta+\frac{V_T-\theta}{\kappa\tau}(1-e^{-\kappa\tau})\big)}-\frac{Var\big(\theta+\frac{V_T-\theta}{\kappa\tau}(1-e^{-\kappa\tau})\big)}{8\big(E(\theta+\frac{V_T-\theta}{\kappa\tau}(1-e^{-\kappa\tau}))\big)^{\frac{3}{2}}}\bigg)\\
 &=100\times\bigg(\sqrt{\theta-\frac{\theta(1-e^{-\kappa\tau})}{\kappa\tau}+\frac{1-e^{-\kappa\tau}}{\kappa\tau}\cdot E(V_T)}\nonumber\\
 &-\frac{(1-e^{-\kappa\tau})^2Var(V_T)}{8\kappa^2\tau^2\big(\theta-\frac{\theta(1-e^{-\kappa\tau})}{\kappa\tau}+\frac{1-e^{-\kappa\tau}}{\kappa\tau}\cdot E(V_T)\big)^{3/2}}\bigg)\nonumber. 
\end{align} 
\end{proof}

Zhu and Lian [2011] consider a general model for the S\&P 500 which incorporates stochastic volatility and simultaneous jumps in both the asset price and the volatility process. They found the closed-form pricing formula for the exact price of a VIX future by solving a backward partial integro-differential equation (PIDE). Now we give following closed-form pricing formula for the Heston stochastic volatility model by modifying the result in [Zhu and Lian, 2011].
\begin{theorem}
Assume the dynamics of S$\&$P 500 Index is given by the Heston stochastic volatility model (\ref{h1}), the price of a VIX future with maturity $T$ is then 
\begin{equation}
F(T,V_0)=\frac{1}{2\sqrt{\pi}}\int_0^\infty \frac{1-e^{-100^2sB}f(-100^2sA;T,V_0)}{s^{3/2}}ds, \label{ns7}
\end{equation}
where 
\[
A=\frac{1-e^{-\kappa\tau}}{\kappa\tau},~~B=\theta(1-\frac{1-e^{-\kappa\tau}}{\kappa\tau}),
\]
and $f(\phi;T,V_0)$ is the moment generating function of the stochastic variable $V_T$, given by
\[
f(\phi;T,V_0)=e^{C(\phi,T)+D(\phi,T)V_0}
\]
with 
\[
C(\phi,T)=\frac{-2\kappa\theta}{\sigma^2}\cdot\ln\big(1+\frac{\sigma^2\phi}{2\kappa}(e^{-\kappa T}-1)\big)
\]
and
\[
D(\phi,T)=\frac{2\kappa\phi}{\sigma^2\phi+(2\kappa-\sigma^2\phi)e^{\kappa T}} .
\]
\end{theorem}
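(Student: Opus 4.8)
The plan is to mirror the Laplace-transform argument of Theorem 2, applying the identity (\ref{lp2}) to the random variable $\text{VIX}_T^2$ rather than to the realized variance. First I would note that the quantity under the square root in (\ref{ns5}) is affine in $V_T$: with $A=\frac{1-e^{-\kappa\tau}}{\kappa\tau}$ and $B=\theta\big(1-\frac{1-e^{-\kappa\tau}}{\kappa\tau}\big)$ one has
\[
\theta+\frac{V_T-\theta}{\kappa\tau}(1-e^{-\kappa\tau})=AV_T+B ,
\]
so that by (\ref{ns4}), $\text{VIX}_T^2=100^2(AV_T+B)$ and hence $\text{VIX}_T=\sqrt{100^2(AV_T+B)}$.

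Next I would apply the integral representation (\ref{lp1})--(\ref{lp2}) with $x=\text{VIX}_T^2$. Taking the risk-neutral expectation and interchanging expectation and integration by Fubini's theorem gives
\[
F(T,V_0)=E(\text{VIX}_T)=E\big(\sqrt{\text{VIX}_T^2}\big)=\frac{1}{2\sqrt{\pi}}\int_0^\infty\frac{1-E\big(e^{-s\,\text{VIX}_T^2}\big)}{s^{3/2}}\,ds .
\]
Because $B$ is deterministic, the Laplace transform factorises as
\[
E\big(e^{-s\,\text{VIX}_T^2}\big)=e^{-100^2sB}\,E\big(e^{-100^2sAV_T}\big)=e^{-100^2sB}\,f(-100^2sA;T,V_0),
\]
where $f(\phi;T,V_0):=E(e^{\phi V_T})$ is the moment generating function of $V_T$. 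Substituting this into the previous display yields precisely (\ref{ns7}), so the whole theorem reduces to establishing the stated closed form for $f$.

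For the moment generating function I would use the affine structure of the CIR variance together with the Feynman--Kac representation. Writing $u(t,v)=E\big(e^{\phi V_T}\mid V_t=v\big)$, the function $u$ solves the backward equation attached to the second SDE of (\ref{h1}),
\[
u_t+\kappa(\theta-v)u_v+\tfrac12\sigma^2 v\,u_{vv}=0,\qquad u(T,v)=e^{\phi v} .
\]
Inserting the exponential-affine ansatz $u(t,v)=e^{C(t)+D(t)v}$ and matching the coefficients of $v^0$ and $v^1$ decouples the problem into
\[
\dot D=\kappa D-\tfrac12\sigma^2 D^2,\qquad \dot C=-\kappa\theta D ,
\]
with terminal data $D(T)=\phi$ and $C(T)=0$. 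The equation for $D$ is a separable Riccati equation; integrating it in closed form and then integrating the linear equation for $C$ gives the expressions $D(\phi,T)$ and $C(\phi,T)$ quoted in the theorem, and $f(\phi;T,V_0)=u(0,V_0)=e^{C(\phi,T)+D(\phi,T)V_0}$.

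The main obstacle is this final step: performing the Riccati integration with the correct orientation of time so that the data imposed at $t=T$ produce the specific $e^{\kappa T}$ dependence in $D(\phi,T)$ and the logarithmic term in $C(\phi,T)$, a place where sign errors in the $e^{\pm\kappa T}$ factors are easy to make. A secondary, routine point is justifying the Fubini interchange and the convergence of the $s$-integral: near $s=0$ the integrand is $O(s^{-1/2})$ since $1-E\big(e^{-s\,\text{VIX}_T^2}\big)=O(s)$, and for large $s$ it is $O(s^{-3/2})$, so the integral converges at both ends.
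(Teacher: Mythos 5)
Your proposal is correct, and it is essentially the argument the paper itself relies on: the paper gives no proof of this theorem, deferring to [Zhu and Lian, 2011], and your sketch is precisely the standard derivation behind that citation --- the same integral representation of the square root (\ref{lp1})--(\ref{lp2}) already used in Theorems 2, 4 and 6, applied to the affine expression $\text{VIX}_T^2=100^2(AV_T+B)$ from (\ref{ns4}), with the factorization $E(e^{-s\,\text{VIX}_T^2})=e^{-100^2sB}f(-100^2sA;T,V_0)$ and the CIR moment generating function obtained via Feynman--Kac and the exponential-affine ansatz. Your Riccati system $\dot D=\kappa D-\tfrac12\sigma^2D^2$, $\dot C=-\kappa\theta D$ with $D(T)=\phi$, $C(T)=0$ does integrate to the stated $C(\phi,T)$ and $D(\phi,T)$ (one checks $D(\phi,T)\to\phi$ and $C(\phi,T)\to 0$ as $T\to 0$, and the formulas agree with the known CIR Laplace transform at $\phi=-u$), and your convergence and Fubini remarks are sound since the integrand is nonnegative, $O(s^{-1/2})$ near $0$ and $O(s^{-3/2})$ at infinity, with $\phi=-100^2sA\le 0$ so the MGF exists for all $s>0$.
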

See [Zhu and Lain, 2011] for more details.

\subsubsection{VIX Futures Pricing under the Bates' Model}
In this section we derive the present value of VIX futures in Bates' model (\ref{b1}), where $\xi_1=1$ and $\xi_2=2\lambda(m-a)$. By the definition of VIX squared (\ref{ns2}), we have that
\begin{align}
\nonumber\text{VIX}_t^2&=\bigg(\frac{1}{\tau}E_t\big(\int_t^{t+\tau}V_s ds\big)+2\lambda(m-a)\bigg)\times 100^2\\
&=100^2\times \bigg(\theta+\frac{V_t-\theta}{\kappa\tau}(1-e^{-\kappa\tau})+2\lambda(m-a)\bigg), \label{ns8}
\end{align}
and the present value of a VIX future contract with maturity $T$ is
\begin{align} 
\nonumber F(T)&=E(\text{VIX}_T)\\
&=100\times E\bigg(\sqrt{\theta+\frac{V_T-\theta}{\kappa\tau}(1-e^{-\kappa\tau})+2\lambda(m-a)}\bigg).  \label{ns9}
\end{align}
Once again, we use convexity correction formula and Laplace transform method to evaluate $F(T)$ in (\ref{ns9}).\

\begin{theorem}
By using the convexity correction formula (\ref{ag2}), we have following approximation for the value of VIX future contract in Bates' model,
\begin{align}
 \nonumber   F(T)&\approx100\times\bigg(\sqrt{\theta-\frac{\theta(1-e^{-\kappa\tau})}{\kappa\tau}+\frac{1-e^{-\kappa\tau}}{\kappa\tau}\cdot E(V_T)+2\lambda(m-a)}\\
 &-\frac{(1-e^{-\kappa\tau})^2Var(V_T)}{8\kappa^2\tau^2\big(\theta-\frac{\theta(1-e^{-\kappa\tau})}{\kappa\tau}+\frac{1-e^{-\kappa\tau}}{\kappa\tau}\cdot E(V_T)+2\lambda(m-a)\big)^{3/2}}\bigg), \label{ns10}
\end{align} 
where $E(V_T)$ and $Var(V_T)$ are same as in (\ref{ns6}).
\end{theorem}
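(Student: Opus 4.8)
The plan is to apply the convexity correction formula (\ref{ag2}) directly to the present-value expression (\ref{ns9}), exploiting the fact that the quantity under the square root is an \emph{affine} function of the single random variable $V_T$. First I would set
$$G=\theta+\frac{V_T-\theta}{\kappa\tau}(1-e^{-\kappa\tau})+2\lambda(m-a),$$
so that $F(T)=100\times E(\sqrt{G})$, and rewrite $G$ as a linear function of $V_T$ plus a deterministic constant,
$$G=\frac{1-e^{-\kappa\tau}}{\kappa\tau}V_T+\Big(\theta-\frac{\theta(1-e^{-\kappa\tau})}{\kappa\tau}+2\lambda(m-a)\Big).$$

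The next step is to read off $E(G)$ and $Var(G)$. Since $G$ is affine in $V_T$, linearity gives at once
$$E(G)=\theta-\frac{\theta(1-e^{-\kappa\tau})}{\kappa\tau}+\frac{1-e^{-\kappa\tau}}{\kappa\tau}E(V_T)+2\lambda(m-a),\qquad Var(G)=\frac{(1-e^{-\kappa\tau})^2}{\kappa^2\tau^2}Var(V_T),$$
the additive constant entering only the mean while the coefficient of $V_T$ is squared in the variance. Here I would invoke the key structural feature of the Bates model (\ref{b1}): its variance dynamics are identical to Heston's and are driven solely by $W_t^2$, independently of the compound Poisson jump process in the price equation. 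Consequently $E(V_T)$ and $Var(V_T)$ coincide with the Heston expressions already recorded in (\ref{ns6}); in particular the jumps leave $Var(V_T)$ untouched. Substituting these two moments into (\ref{ag2}) and multiplying by $100$ then reproduces the claimed approximation (\ref{ns10}).

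The main obstacle is conceptual rather than computational: one must recognize that the jump contribution to the VIX enters only through the deterministic shift $2\lambda(m-a)$, so it moves the mean $E(G)$ but not the variance $Var(G)$. Apart from this constant shift the argument is verbatim the proof of the corresponding Heston VIX future result, and the computation is routine precisely because $G$ is affine in $V_T$, collapsing everything onto the single pair $\big(E(V_T),Var(V_T)\big)$ that is already in hand.
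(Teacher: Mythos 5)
Your proposal is correct and follows essentially the same route as the paper: the paper likewise applies the convexity correction formula (\ref{ag2}) directly to the affine expression $\theta+\frac{V_T-\theta}{\kappa\tau}(1-e^{-\kappa\tau})+2\lambda(m-a)$, with the constant jump term $2\lambda(m-a)$ entering only the mean and the coefficient $\frac{1-e^{-\kappa\tau}}{\kappa\tau}$ squared in the variance, and uses the Heston moments $E(V_T)$, $Var(V_T)$ from (\ref{ns6}) since the Bates variance dynamics are unchanged. Your explicit remark that the jumps leave $Var(V_T)$ untouched is left implicit in the paper but is exactly the justification it relies on.
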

\begin{proof}
By using convexity correction formula, we have that
\begin{align}
 \nonumber   F(T)&=100\times E\bigg(\sqrt{\theta+\frac{V_T-\theta}{\kappa\tau}(1-e^{-\kappa\tau})+2\lambda(m-a)}\bigg) \\
\nonumber&\approx 100\times\bigg(\sqrt{E\big(\theta+\frac{V_T-\theta}{\kappa\tau}(1-e^{-\kappa\tau})+2\lambda(m-a)\big)}\nonumber\\
&-\frac{Var\big(\theta+\frac{V_T-\theta}{\kappa\tau}(1-e^{-\kappa\tau})+2\lambda(m-a)\big)}{8\big(E(\theta+\frac{V_T-\theta}{\kappa\tau}(1-e^{-\kappa\tau}))+2\lambda(m-a)\big)^{\frac{3}{2}}}\bigg)\\
 \nonumber&=100\times\bigg(\sqrt{\theta-\frac{\theta(1-e^{-\kappa\tau})}{\kappa\tau}+\frac{1-e^{-\kappa\tau}}{\kappa\tau}\cdot E(V_T)+2\lambda(m-a)}\\
 &-\frac{(1-e^{-\kappa\tau})^2Var(V_T)}{8\kappa^2\tau^2\big(\theta-\frac{\theta(1-e^{-\kappa\tau})}{\kappa\tau}+\frac{1-e^{-\kappa\tau}}{\kappa\tau}\cdot E(V_T)+2\lambda(m-a)\big)^{3/2}}\bigg) \nonumber. 
\end{align} 
\end{proof}

\begin{theorem}
Assume the dynamics of S$\&$P 500 Index is given by the Bates jump model, then the price of a VIX future with maturity $T$ is given by
\begin{equation}
F(T,V_0)=\frac{1}{2\sqrt{\pi}}\int_0^\infty \frac{1-e^{-100^2sB}f(-100^2sA;T,V_0)}{s^{3/2}}ds, \label{ns11}
\end{equation}
where 
\[
A=\frac{1-e^{-\kappa\tau}}{\kappa\tau},~~B=\theta(1-\frac{1-e^{-\kappa\tau}}{\kappa\tau})+2\lambda(m-a),
\]
and $f(\phi;T,V_0)$ is the moment generating function of the stochastic variable $V_T$, given by
\[
f(\phi;T,V_0)=e^{C(\phi,T)+D(\phi,T)V_0}
\]
with 
\[
C(\phi,T)=\frac{-2\kappa\theta}{\sigma^2}\cdot\ln\big(1+\frac{\sigma^2\phi}{2\kappa}(e^{-\kappa T}-1)\big)
\]
and
\[
D(\phi,T)=\frac{2\kappa\phi}{\sigma^2\phi+(2\kappa-\sigma^2\phi)e^{\kappa T}} .
\]
\end{theorem}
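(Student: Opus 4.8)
The plan is to observe that under the Bates model (\ref{b1}) the terminal variance $V_T$ obeys exactly the same CIR dynamics as in Heston, so that the price jumps enter the VIX only through an additive constant; the formula then follows from the square-root Laplace identity (\ref{lp2}) in precisely the way the Heston VIX-future formula (\ref{ns7}) was obtained.

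First I would collect the terms inside the radical in (\ref{ns9}). Setting $A=\tfrac{1-e^{-\kappa\tau}}{\kappa\tau}$ and $B=\theta(1-A)+2\lambda(m-a)$, a direct rearrangement of the definition (\ref{ns8}) gives $\text{VIX}_T^2=100^2(AV_T+B)$, whence
\[
F(T)=E(\text{VIX}_T)=E\Big(\sqrt{100^2(AV_T+B)}\,\Big).
\]
The essential point at this step is that the compound-Poisson jumps act only on the price process, not on $V_t$; they contribute the deterministic shift $2\lambda(m-a)$ to $B$ but leave the law of $V_T$ — and hence its moment generating function — identical to the Heston case.

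Next I would invoke the integral representation (\ref{lp2}) with $x=100^2(AV_T+B)$. After the same Fubini interchange already used to justify (\ref{lp2}), the problem reduces to the Laplace transform of $x$, which factorizes as
\[
E\big(e^{-s\,100^2(AV_T+B)}\big)=e^{-100^2sB}\,E\big(e^{-100^2sAV_T}\big)=e^{-100^2sB}f(-100^2sA;T,V_0),
\]
where $f(\phi;T,V_0)=E(e^{\phi V_T})$ is the moment generating function of $V_T$. Since $A>0$, $s>0$ and $V_T\ge 0$, the argument $\phi=-100^2sA$ is negative and the expectation is finite, so the transform is well defined; substituting it back yields (\ref{ns11}).

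The hard part, were it not already available, is the closed form $f(\phi;T,V_0)=e^{C(\phi,T)+D(\phi,T)V_0}$. This is where the affine structure of the square-root variance process is used: $f$ solves the backward Kolmogorov equation of the CIR process, and the affine ansatz collapses it to a Riccati equation for $D$ together with a quadrature for $C$, whose solutions are the stated expressions (one checks $C(\phi,0)=0$ and $D(\phi,0)=\phi$, so that $f(\phi;0,V_0)=e^{\phi V_0}$, confirming it is indeed the moment generating function). Because this moment generating function coincides exactly with the one in the Heston VIX-future formula (\ref{ns7}) and is established in [Zhu and Lian, 2011], I would cite it rather than re-solve the Riccati system; the entire content beyond the Heston case is the jump-adjusted constant $B$.
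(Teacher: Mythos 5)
Your proposal is correct and follows essentially the same route as the paper, which itself supplies no argument for this theorem beyond the citation to [Zhu and Lian, 2011]: your reduction of $\text{VIX}_T^2$ in (\ref{ns8}) to $100^2(AV_T+B)$, the observation that the jumps leave the law of $V_T$ untouched and enter only through the deterministic shift $2\lambda(m-a)$ in $B$, and the application of the square-root representation (\ref{lp2}) to factor out $e^{-100^2 sB}$ and leave the CIR moment generating function $f(-100^2 sA;T,V_0)$ exactly reconstruct the derivation behind (\ref{ns11}), in parallel with the Heston case (\ref{ns7}). Your verification that $C(\phi,0)=0$ and $D(\phi,0)=\phi$, the finiteness remark for $\phi=-100^2 sA\le 0$, and the deferral of the Riccati solution to [Zhu and Lian, 2011] are all sound and consistent with the paper's treatment.
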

See [Zhu and Lian, 2011] for more details.

\subsection{Empirical Studies}
In this part we will use historical data of the S\&P 500 index and the pricing formulas derived in above section to price the VIX futures, and evaluate the pricing performance by comparing the estimated prices with market prices of VIX futures.

\subsubsection{Calibration}
It has been shown that the Markov chain Monte Carlo (MCMC) algorithm outperforms some other calibration methods in many ways. Its' advantages such as stability, computational efficiency, the ability of detecting jumps [Cape {\it et al.}, 2015] make it suitable for parameters estimation in our cases. In this section, we use MCMC algorithm to estimate the model parameters from the historical data of the S\&P 500 index over the period from January 13, 2015 to January 13, 2017.\

In our study, we use the method provided in [Cape {\it et al.}, 2015] and [Johannes and Polson, 2006], which use Gibbs sampler for parameter estimation and Metropolis-Hasting algorithm for simulating the variance process $V_t$. We implement the MCMC calibration by using the R package provided by the authors in [Cape {\it et al.}, 2015]. The calibration procedure was applied to Heston's and Bates' models respectively. The following were chosen as the prior  distribution parameters, 
\begin{flalign*}
r \sim N(0,1),\\
\kappa\sim N(0,1),\\
\theta \sim N(0,1),\\
\psi:=\rho\sigma \sim N(0,\frac{\Omega}{2}),\\
\Omega:=\sigma^2(1-\rho^2) \sim IG(2,\frac{1}{200}),\\
\lambda\sim \text{Beta}(2,40),\\
a\sim N(0,1),\\
b^2\sim IG(5.0,0.2).
\end{flalign*}
Initial values for the MCMC algorithm were chosen based off the observed data when possible or a random assignment when more educated estimates were not possible (see [Cape {\it et al.}, 2015]). As a result, the following initials were chosen:
\begin{align*}
r^{(0)}= 0.1,\\
\kappa^{(0)}=5,\\
\theta^{(0)} = 0.0225,\\
\Omega =0.02,\\
\psi^{(0)} \sim N\left(0,\frac{\Omega^{(0)}}{2}\right),
\end{align*}
\begin{align*}
\lambda^{(0)}\sim \text{Beta}(2,40),\\
a^{(0)}=0,\\
b^{2(0)}= 0.1.
\end{align*}

After our simulations, we discarded the first 3000 runs as `burn-in' period and used the last 8,000 iterations to estimate model parameters. Means of the draws from the posterior distributions of each parameter are reported as well as the standard deviation of the draws for the distribution. The algorithm was run 10 times, recording the parameter values after each run after which the means were calculated from the ten runs. Each run took about 20 minutes and was done completely in the statistical language of R, utilizing pre-defined routines for random number generation. Table \ref{table1} provides a summary of the results obtained from the MCMC simulations.
\begin{table}[h!]
\centering
  \begin{tabular}{ c  c  c }
    \hline
    Parameters~~~&Heston~~~ & ~Bates \\ \hline

     $r$  & $-0.0018$ & $-0.0044$ \\ 
       & $(0.0794)$ & $(0.0824)$ \\ 
      $\kappa$  & $0.8519$ & $0.8269$ \\ 
       & $(0.7590)$ & $(0.7239)$ \\ 
      $\theta$ & $0.1574$ & $0.1793$ \\ 
       & $(0.2939)$ & $(0.2959)$ \\ 
      $\sigma$  & $0.2403$ & $0.2916$ \\ 
       & $(0.0768)$ & $(0.0384)$ \\ 
      $\rho$  &$-0.8740$ & $-0.8734$ \\ 
       &$(0.0478)$ & $(0.0439)$ \\ 
      $\lambda$  &  & $0.0038$ \\ 
       & & $(0.0027)$ \\ 
      $a$  & & $-0.0001$ \\ 
      & & $(0.9985)$ \\ 
      $b^2$  &  & $0.0500$ \\ 
       & & $(0.0294)$ \\ \hline
  \end{tabular}
  \caption{Means and standard deviation of estimated parameters} \label{table1}
\end{table}\

Like other published results, e.g., [Cape {\it et al.}, 2015], [Zhu and Lian, 2011], there is a strong negative correlation between the instantaneous volatility and returns, and the correlation is even stronger than others that have been observed. The estimation of $\lambda$ indicate that the jump happens very infrequently, with roughly one jump observed per year. Although Cape {\it et al.} [2015] point out that the MCMC algorithm we used has the difficulty in detecting jumps during times of high volatility such as the late 2008, the S\&P 500 index is relatively stable during the period we choose.

\subsubsection{Comparative Studies in VIX Future Pricing Performance}
In this section, we use VIX futures market prices as the benchmark, and compare the pricing performance of Heston's and Bates' models under convexity correction approximation and closed-form solution pricing formula. By following the studies in [Habtemicael and SenGupta, 2017], we employ following measures of ``goodness of fit'' of the estimated VIX future prices: the absolute percentage error (APE), the average absolute error (AAE), the average relative percentage error (ARPE) the root-mean-square error (RMSE) and the residual standard error (RSE), which are given by
\[
\text{APE}=\frac{1}{\text{mean price}}\sum_{\text{data points}}\frac{|\text{market price - model price}|}{\text{data points}},
\]
\[
\text{AAE}=\sum_{\text{data points}}\frac{|\text{market price - model price}|}{\text{data points}},
\]
\[
\text{ARPE}=\frac{1}{\text{data points}}\sum_{\text{data points}}\frac{|\text{market price - model price}|}{\text{data points}},
\]
\[
\text{RMSE}=\sqrt{\sum_{\text{data points}}\frac{|\text{market price - model price}|}{\text{data points}}}, ~~~\text{RSE}=\sqrt{\frac{SSE}{n-k}},
\]
where SSE is the sum of square error, $n$ is the number of observations and $k$ is the number of parameters to be estimated. By using the estimated parameters reported in Table \ref{table1}, we compute the VIX futures prices with different maturities on Jan 13, 2017. The values of APE, AAE, ARPE, RMSE and RSE are tabulated in Table \ref{table2}.

\begin{sidewaystable}
\centering
  \begin{tabular}{ c  c  c  cc ccc cccc}
    \hline
    Pricing Errors & Models and Pricing Methods & All Futures & $T\le 30$ &   $30<T\le 90$ & $90<T$ \\ \hline
    APE       
    & Heston (Convex) & 0.1928 & 0.0497 & 0.1022 & 0.3114 \\ 
       & Heston (Closed-form) & 0.0774 & 0.0737 & 0.0473 & 0.0959 \\ 
      & Bates (Convex) & 0.1811 & 0.1143 & 0.0472 & 0.2873 \\  
   & Bates (Closed-form) & 0.0820 & 0.0258 & 0.0291 & 0.1383 \\ 
       &\\ 
      AAE        
      & Heston (Convex) & 3.0759 & 0.6531 & 1.5556 & 5.7046 \\ 
       & Heston (Closed-form) & 1.2353 & 0.9679 & 0.7201 & 1.7569 \\ 
      & Bates (Convex) & 2.8896 & 0.4289 & 0.7183 & 5.2627 \\  
   & Bates (Closed-form) & 1.3087 & 0.3384 & 0.4431 & 2.5326 \\ 
      & \\ 
     ARPE    
     & Heston (Convex) & 0.2197 & 0.1633 & 0.3889 & 0.9508 \\ 
       & Heston (Closed-form) & 0.0882 & 0.2420 & 0.1800 & 0.2928 \\ 
      & Bates (Convex) & 0.2064 & 0.3753 & 0.1796 & 0.8771 \\  
   & Bates (Closed-form) & 0.0935 & 0.0846 & 0.1108 & 0.4221 \\   
    & \\ 
     RMSE    
     & Heston (Convex) & 1.7538 & 0.8081 & 1.2472 & 2.3884 \\ 
       & Heston (Closed-form) & 1.1114 & 0.9838 & 0.8486 & 1.3255 \\ 
      & Bates (Convex) & 1.6999 & 0.6549 & 0.8475 & 2.2941 \\  
   & Bates (Closed-form) & 1.1440 & 0.5817 & 0.6657 & 1.5914 \\ 
      & \\ 
     RSE
     & Heston (Convex) & 6.0902 & 0.5559 & 1.3931 & 5.9026 \\ 
       & Heston (Closed-form) & 2.2036 & 0.7935 & 0.6515 & 1.9498 \\ 
      & Bates (Convex) & 5.7498 & 1.2680 & 0.7135 & 5.5626 \\  
   & Bates (Closed-form) & 2.7307 & 0.2866 & 0.4435 & 2.6792   \\ 
     \hline
  \end{tabular}
  \caption{The Test of Pricing Performance} \label{table2}
\end{sidewaystable}

From the Table \ref{table2}, we can draw some conclusions about the pricing performance. All these five different measures of pricing performance show that the VIX futures prices estimated from closed-form solutions are more accurate than those estimated from convexity correction approximation for both Heston's and Bates' models, generally. However, the convex correction method outperforms the closed-form solution method for the short-term futures in the Heston's model. Also, for short-term and medium-term futures, Bates' model with closed-form solution performs better than the other cases. \

\begin{figure}[h!]
\begin{center}
\includegraphics[scale=0.8]{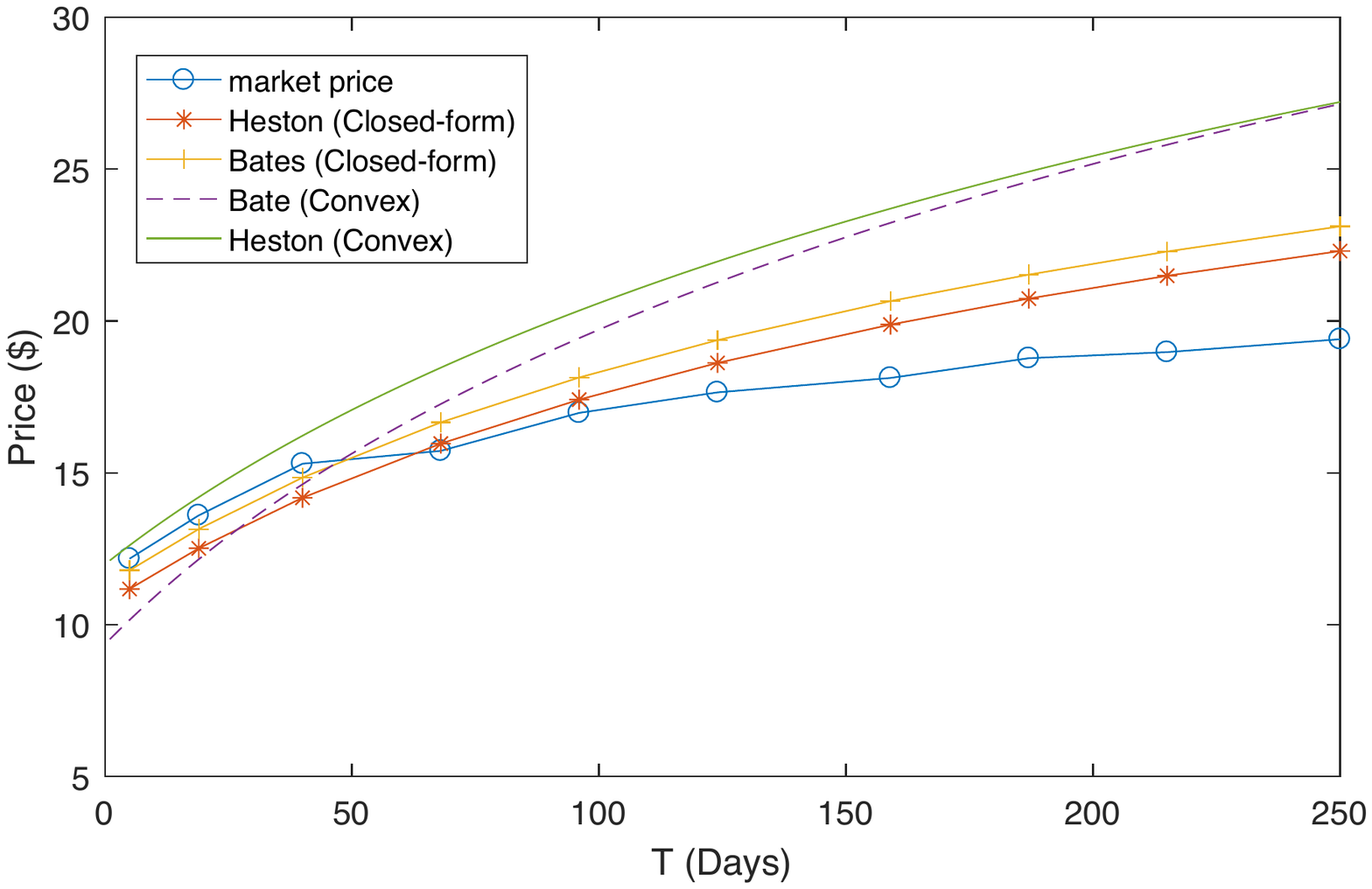}
\end{center}
\caption{Comparison of VIX Futures Market Price with Estimated Price}
\label{vixprice}
\end{figure}

To illustrate the pricing performance more clearly, we plot the market prices of VIX futures and estimated prices on the same graph in Figure \ref{vixprice}. It can be observed that the Heston model with convexity correction approximation always overvalue the futures; for the short-term VIX futures, the Bates model with closed-form solution provides the best estimation; for the VIX futures with medium to long term maturities, all of the pricing methods will over-price the futures. \

\begin{figure}[h!]
\begin{center}
\includegraphics[scale=0.8]{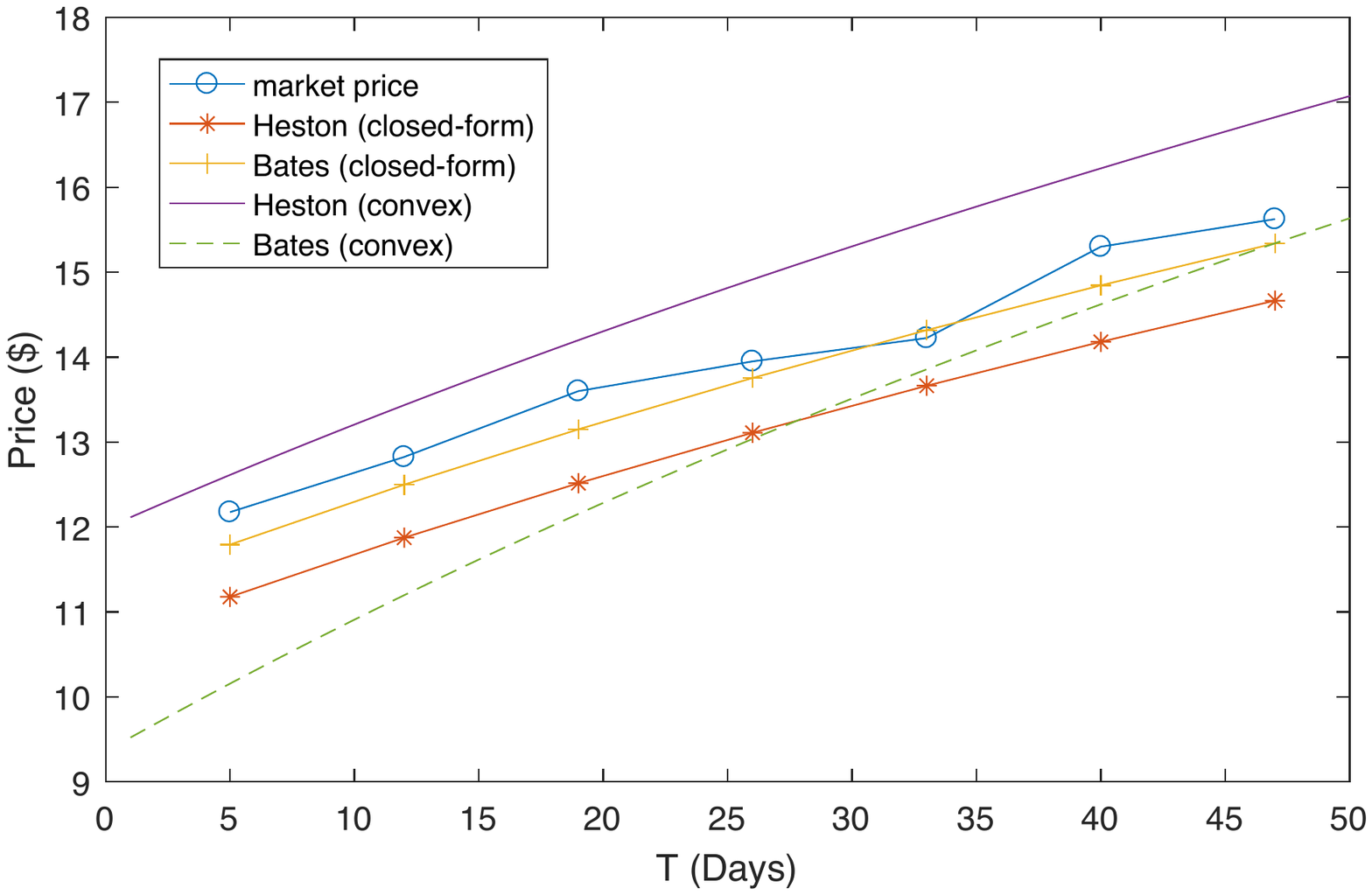}
\end{center}
\caption{Comparison of Short-term VIX Futures Market Price with Estimated Price}
\label{vixprice1}
\end{figure}

However, we think the trading volumes of VIX futures is a main reason for the poor pricing performance for long-term VIX futures. From Table \ref{table3} we can see that the trading of short-term VIX future (within one month) are very active, and the total trading volume decreases significantly as the time-to-expire increases. The data we downloaded from CBOE website shows that there is no VIX future with maturity longer than 250 days had been traded in the market. To some degree, this explain the deviation of the estimated price and the market price of long-term futures as in Figure \ref{vixprice}. The prices of long-term VIX futures with low trading volume are model-free and thus cannot reflect the market expectation on the underlying asset.

\begin{table}[h!]
\centering
  \begin{tabular}{ c  c }
    \hline
    Time to expire(days) & Total volume(contracts) \\ \hline
    5 & 110184\\
    33 &  113493 \\ 
    68 & 34580 \\
      96 & 12146 \\ 
      124 & 7351 \\  
     159 & 5007 \\   
     187 &      1815     \\
     215 &        343      \\
     250 &  0\\
      \hline
  \end{tabular}
  \caption{Total Trading Volumes of VIX Futures with Different Maturities on Jan 13, 2017} \label{table3}
\end{table}

For better comparison of pricing performance for short-term VIX futures, we pick more short-term VIX futures and repeat the procedure, and we get the pricing result as in Figure \ref{vixprice1}, which shows that Bates (closed-form) can provide relatively reliable estimation of the market price, and it is even more accurate when the time to expire is between 25 to 35 days. We want to point out that the conclusion would be more convincing if we took VIX futures on different date into account instead of only considering the pricing performance on Jan 13, 2017. One possible way to do that is collect the market data, sort all the observed futures pricing according to expiration,  group these futures by every 30 days to expiration, and then compute the average prices of each group [Zhu 2011]. Due to the computational complexity, in this thesis we just use the data on a specific date to provide an intuitively understanding, although the conclusions coincide with those in [Zhu, 2011].

\section{Conclusion} In this paper, we considered variance and volatility swaps pricing for different stochastic volatility models, such as Heston, Bates, Merton and L\'evy-based Heston models, and presented numerical results based on historical data of the $S\&P$ 500 Index, January 13, 2005-January 13, 2017. We also studied VIX futures pricing for the Heston and the Bates models, presented empirical studies for them, based on the above-mentioned  data,  and performed comparative studies.

\section*{Acknowledgements:} The authors wish to thank NSERC for continuing support.

\section{References}
\hspace{0.5cm}

Bates, D. (1996). Jump and Stochastic Volatility: Exchange Rate Processes Implicit in Deutsche Mark in Options,
{\it Review of Financial Studies}, 9, pp. 69-107.

Brockhaus, O. and Long, D. (2002). Volatility Swaps Made Simple, {\it  Risk}, 19(1), pp. 92-95.

Broadie, M. and A. Jain, A. (2008). The Effect of Jumps and Discrete Sampling on Volatility and Variance Swaps,
 {\it International Journal of Theoretical and Applied Finance}, Vol.11, No.8, pp. 761-797.

Cairns, A. (2004). {\it Interest Rate Models: An Introduction}.
(Princeton University Press, USA).

CBOE (2014). The CBOE Volatility Index - VIX. {\it White Paper}.\\
(http://www.cboe.com/micro/vix/vixwhite.pdf)

 Cape, J., Dearden, W., Gamber, W., Liebner, J., Lu, Q.  and Nguyenu, M. (2015). Estimating Heston's and Bates'€™ Models Parameters Using Markov Chain Monte Carlo Simulation,
 {\it Journal of Statistical Computation and Simulation},  Volume 85, Issue 11.

Carr, P and Wu, L. (2006).  A Tale of Two Indices,
{\it The Journal of Derivatives}, 13 (3).

Heston, S. (1993). A Closed-Form Solution for
Options with Stochastic Volatility with Applications to Bond and Currency Options,
{\it The Review of Financial Studies}, 6(2), pp. 327-343.

 Habtemicael, S. and SenGupta, I. (2016).
Pricing Variance and Volatility Swaps for Barndorff-Nielsen and Shephard Process Driven Financial Markets,
{\it €'International Journal of Financial Engineering}, Vol. 03, Issue 04.

Johannes, M. and  N. Polson, N. (2006).
MCMC Methods for Continuous-Time
Financial Econometrics. In: {\it Handbook in Financial Econometrics}, Vol. 2, Chapter 13, pp. 1-72. (Ed. Y. Ait-Sahalia and L.P. Hansen).

Lin, Y. (2007). Pricing VIX Futures: Evidence from Integrated Physical and Risk-neutral Probability Measures,
{\it  Journal of Futures Markets}, 27(12), pp. 1175 - 1217.

Rosinski, J and Woyczinski, W. On Ito Stochastic Integration With Respect To p-stable Motion: Inner Clock, Integrability of Sample Paths, Double and Multiple Integrals,
{\it Annals of Probability}, 14, pp. 271-286.

 Swishchuk, A. (2009).  Multi-Factor L\'evy Models for Pricing Financial and Energy Derivatives,
 {\it Canadian Applied Mathematics Quarterly}, Vol.17, No.4, Winter.

Zhu, S. and Lian, G. (2011). An Analytical Formula for VIX Futures and Its Applications,
 {\it Journal of Futures Markets}, 32(2), pp. 166 - 190.

\hspace{0.5cm}

\end{document}